\newcommand{\VertexCover}{\textsc{Vertex Cover}}
\newcommand{\TimeCover}{\textsc{MinTimelineCover}}
\newcommand{\ResTimeCover}{\textsc{Restricted Timeline Cover}}
\newcommand{\Untangling}{\textsc{Network Untangling}}
\newcommand{\vof}[1]{\tau({#1})}
\newcommand{\midv}[3]{q_{{#1},{#2},{#3}}}
\renewcommand{\P}{P}
\newcommand{\DigraphCut}{\textsc{Constrained Digraph Pair Cut}}
\newcommand{\TDom}{{\mathcal{T}}}
\newtheorem{problem}{Problem}
\tikzset{main node/.style={circle,draw,minimum size=0.2cm,inner sep=0pt},
            }
\tikzset{example node/.style={circle,draw,minimum size=0.5cm,inner sep=4pt},
% \tikzset{example node/.style={circle,fill=black,draw,minimum size=0.2cm,inner sep=0pt},
}
\newcommand{\ml}[1]{{\color{black}{{#1}}}}
\newcommand{\rd}[1]{{\color{black}{{#1}}}}
\newcommand{\utpos}[1]{u^+_{{#1}}}
\newcommand{\vtpos}[1]{v^+_{{#1}}}
\newcommand{\utneg}[1]{u^-_{{#1}}}
\newcommand{\vtneg}[1]{v^-_{{#1}}}
\title{An FTP Algorithm for Temporal Graph Untangling} %TODO Please add
\author{Riccardo Dondi}{Università degli studi di Bergamo, Italy}{riccardo.dondi@unibg.it}{}{}%TODO mandatory, please use full name; only 1 author per \author macro; first two parameters are mandatory, other parameters can be empty. Please provide at least the name of the affiliation and the country. The full address is optional. Use additional curly braces to indicate the correct name splitting when the last name consists of multiple name parts.
\author{Manuel Lafond}{ Universit de Sherbrooke, Canada}{manuel.lafond@usherbrooke.ca}{}{}
\authorrunning{...} %TODO mandatory. First: Use abbreviated first/middle names. Second (only in severe cases): Use first author plus 'et al.'
\keywords{Temporal Graphs, Vertex Cover, Graph Algorithms, Parameterized Complexity} %TODO mandatory; please add comma-separated list of keywords
\begin{document}

\maketitle

%TODO mandatory: add short abstract of the document
\begin{abstract}
Several classical combinatorial problems have been
considered and analysed on temporal graphs. 
Recently, a variant of \VertexCover{} on temporal
graphs, called \TimeCover{}, has been introduced to summarize
timeline activities in social networks. The problem asks to cover every temporal edge while minimizing the total span of the vertices (where the span of a vertex is the length of the timestamp interval it must remain active in, minus one).
While the problem has been shown to be NP-hard 
even in very restricted cases, its parameterized complexity has not been fully understood.
The problem is known to be in FPT under the span parameter  
only for graphs with two timestamps, but
the parameterized complexity for the general case is open.
We settle this open problem by giving an
FPT algorithm that is based on a combination of iterative
compression and a 
reduction to the \textsc{Digraph Pair Cut}  problem, a powerful problem that has received significant attention recently.
\end{abstract}

%\newpage

\section{Introduction}
\label{sec:introduction}
% \ml{[ML: the algorithm is not randomized anymore, do you have an idea for a different title?]}
Temporal graphs are emerging as one of the main
models to describe the dynamics of complex networks.
They describe how relations (edges) change
in in a discrete time domain~\cite{DBLP:journals/jcss/KempeKK02,holme2015modern}, 
while the vertex set is not changing.
%The main focus of the algorithmic analysis of temporal
The development of algorithms on temporal graphs has mostly focused on finding paths or walks and on
analyzing graph connectivity~\cite{DBLP:journals/jcss/KempeKK02,DBLP:journals/pvldb/WuCHKLX14,DBLP:journals/tkde/WuCKHHW16,DBLP:journals/jcss/Erlebach0K21,DBLP:journals/jcss/ZschocheFMN20,DBLP:journals/tcs/FluschnikMNRZ20,DBLP:conf/iwoca/BumpusM21,DBLP:conf/iwoca/MarinoS21,DBLP:journals/jcss/AkridaMSR21,DBLP:conf/complexnetworks/DondiH21}.
However, several classical problems in computer science
have been recently extended to temporal graphs
and one of the most relevant problem in
graph theory and theoretical computer science, 
\VertexCover{}, has been considered 
%on temporal graphs
in this context~\cite{DBLP:journals/jcss/AkridaMSZ20,DBLP:conf/aaai/HammKMS22,DBLP:journals/datamine/RozenshteinTG21}.
%Other variants of \VertexCover{} in temporal
% graphs have been introduced in~\cite{DBLP:journals/jcss/AkridaMSR21}.
% A first variant asks for the minimum number of timestamps where vertices
% are defined to be active,
% such that each (non-temporal) edge 
% $e=u v$ is  temporally covered, that is, there exists
% a timestamp $t$ where $e$ is defined and one of %$(u,t)$ and $(v,t)$
% %belongs to the cover.
% $u$ or $v$ is included in the cover at time $t$.
% A second variant asks for each temporal edge to be temporally
% covered at least once for every interval 
% of a given length. 
% The two variants are
% NP-hard, also in very restricted cases \cite{DBLP:journals/jcss/AkridaMSR21}
% and results on the problem variants, including their approximability, have been given in \cite{DBLP:journals/jcss/AkridaMSR21,DBLP:journals/corr/abs-2204-04832}.

In particular, here we study a variant
of \VertexCover{}, called \Untangling{} introduced in~\cite{DBLP:journals/datamine/RozenshteinTG21}.
\Untangling{} has application in 
discovering event timelines and summarizing temporal networks.
It considers a sequence of temporal
interactions between entities (e.g.  discussions between users in a social network) 
and aims to explain 
the observed interactions with few (and short) \emph{activity intervals} of entities, 
such that each interaction is covered  by at least one of the two entities involved (i.e. at least one of the two entities is active when an interaction between them is observed). 

\Untangling{} can be seen as a variant of \VertexCover{}, where we search for a minimum cover of the interactions, called 
temporal edges.
The size of this temporal vertex cover is based
on the definition of \emph{span} of a vertex,
that is the length of vertex activity.
In particular, the span of a vertex is defined 
as the difference between the maximum and minimum timestamp
where the vertex is active.
Hence, if a vertex is active in exactly one timestamp,
it has a span equal to $0$.

Four combinatorial formulations of \Untangling{} have been defined in \cite{DBLP:journals/datamine/RozenshteinTG21},
varying the definition of vertex activity
(a single interval or $h \geq 2$ intervals)
and the objective function
(minimization of the sum of vertex spans or
minimization of the maximum vertex span).
Here we consider the formulation, denoted by \TimeCover{},
where vertex activity is defined as a single interval and
the objective function is the minimization of the sum
of vertex spans.
Hence, given a temporal graph, \TimeCover{}
searches for a cover of the temporal edges
that has minimum span and
such that each vertex is active in 
one time interval.

% Other variants of \VertexCover{} in temporal
% graphs have been introduced in~\cite{DBLP:journals/jcss/AkridaMSR21}.
% A first variant asks for the minimum number of timestamps where vertices
% are defined to be active,
% such that each (non-temporal) edge 
% $e=u v$ is  temporally covered, that is, there exists
% a timestamp $t$ where $e$ is defined and one of %$(u,t)$ and $(v,t)$
% %belongs to the cover.
% $u$ or $v$ is included in the cover at time $t$.
% A second variant asks for each temporal edge to be temporally
% covered at least once for every interval 
% of a given length. 
% The two variants are
% NP-hard, also in very restricted cases \cite{DBLP:journals/jcss/AkridaMSR21}
% and results on the problem variants, including their approximability, have been given in \cite{DBLP:journals/jcss/AkridaMSR21,DBLP:journals/corr/abs-2204-04832}.

The \TimeCover{} problem is known to be NP-hard \cite{DBLP:journals/datamine/RozenshteinTG21}.
The problem is hard also in very restricted cases
when each timestamp contains at most one temporal edge \cite{DBLP:conf/ictcs/Dondi22},
when each vertex has at most two 
incident temporal edges in each timestamp and
the temporal graph is defined over three timestamps 
\cite{DBLP:conf/ictcs/Dondi22}, and
when the temporal graph is defined over two timestamps \cite{DBLP:conf/ijcai/FroeseKZ22}.
Note that, since the span of a vertex activity in 
exactly one timestamp is equal to $0$,
\TimeCover{} is trivially in P when the temporal
graph is defined on a single timestamp,
since in this case any solution of the problem has span $0$.
Furthermore, 
deciding whether there exists a solution
of \TimeCover{} that has span equal to $0$ 
can be decided in polynomial time via a reduction to \textsc{2-SAT}~\cite{DBLP:journals/datamine/RozenshteinTG21}.

\TimeCover{} has been considered also in the parameterized complexity
framework.
The definition of span leads to a problem
where the algorithmic approaches applied to \VertexCover{} cannot be
easily extended for the parameter span of the solution. Indeed, in \VertexCover{}
for each edge we are sure than at least one of
the endpoints must be included in the solution,
thus at least one of the vertex contributes to
the cost of the solution.
This leads to the textbook FPT algorithm of branching over the endpoints of any edge.
%This
%can be used for designing FPT algorithms for parameter
%the size of the solution.
For \TimeCover{}, a vertex with span $0$ may cover a temporal
edge, as the vertex can be active
only in the timestamp where the temporal edge
is defined. This makes it more challenging to 
design FPT algorithms when the parameter
is the span of the solution.
In this case, \TimeCover{} is known
to admit a parameterized algorithm 
%for the parameter
%the span of a solution 
only when the input temporal graph is defined over
two timestamps  \cite{DBLP:conf/ijcai/FroeseKZ22}, with a
 parameterized reduction to the \textsc{Almost 2-SAT} problem. 
% \ml{ML: I thought it was Almost 2-SAT.  Or is it a different reduction?}.
However, the parameterized complexity
of \TimeCover{} for parameter
the span of the solution on general instances
has been left open \cite{DBLP:conf/ijcai/FroeseKZ22,DBLP:conf/ictcs/Dondi22}.
The authors of \cite{DBLP:conf/ijcai/FroeseKZ22}  have also analyzed the parameterized complexity of
the variants of \Untangling{}
proposed in \cite{DBLP:journals/datamine/RozenshteinTG21},
considering other parameters in addition
to the span of the solution:
the number
of vertices of the temporal graph, 
the length of the time domain, and
the number of intervals of vertex
activity. % and the span of a solution.

\noindent \textbf{Our contributions.}
We solve the open question on 
the parameterized complexity
of \TimeCover{} by showing that the problem is FPT in parameter $k$, the span of a solution, even if the number of timestamps is unbounded. Our algorithm takes time $O^*(2^{5k \log k})$, where the $O^*$ notation hides polynomial factors.
Our algorithm is divided into two phases, each using a different technique.  First, given a temporal graph $G$, we use a variant of iterative compression, where we start from a solution $S$ of span at most $k$ on a subgraph of $G$ induced by a subset of vertices (taken across all timestamps), and then try to maintain such a solution after adding a new vertex of $G$ to the graph under consideration.  This requires us to reorganize which vertices involved in $S$ should be in the solution or not, and in which timestamps.  One challenge is that since the number of such timestamps is unbounded, there are too many ways to decide how to include or not include the vertices that are involved in $S$.  We introduce the notion of a \emph{feasible assignment}, 
which 
%% NEW
%is a subset of the vertices in $S$ that summarizes the ways in which 
allows us to compute how the
vertices in
$S$ can be reorganized (see text for definition).  There are only $2^{O(k \log k)}$ 
ways of reorganizing the vertices in $S$.
%such feasible assignments.  
We try each such feasible assignments $X$, and we must then find a temporal cover of the whole graph $G$ that ``agrees'' with $X$. 

This leads to the second phase of the algorithm, which decides if such an agreement cover exists through a reduction to a variant of a problem called \textsc{Digraph Pair Cut}.  
In this problem, we receive a directed graph and forbidden pairs of vertices, and we must delete at most $k$ arcs so that a specified source vertex does not reach both vertices from a forbidden pair.  It is known that the problem can be solved in time $O^*(2^k)$.  
In this work, we need a version where the input specifies a set of deletable and undeletable arcs, which we call \DigraphCut{}.
The \textsc{Digraph Pair Cut} problem and its variants have played an important role in devising randomized kernels using matroids~\cite{DBLP:journals/jacm/KratschW20} and, more recently, in establishing a dichotomy in the complexity landscape of constraint satisfaction problems~\cite{kim2022directed,kim2023flow}.
Here, the problem is useful since it can model the implications of choosing or not a vertex in a solution and, in a more challenging way, allows implementing the notion of cost using our definition of span.
We hope that the techniques developed for this reduction can be useful for other variants of temporal graph cover.

% The algorithm is based on the iterative
% compression technique, but requires also a second
% phase, where we design a complex reduction
% to the \DigraphCut{} problem, that recently has
% been shown to admit a randomized FPT algorithm.

% The paper is organized as follows.
% First, in Section \ref{sec:preliminaries} we 
% present the main definition that are useful in the 
% remaining part of the paper and we formally define
% the \TimeCover{} problem. Then in Section \ref{sec:FPT}
% we present the randominzed FPT algorithm.
% We conclude the paper with some open problems.
% \ml{ML: I removed the description of each section to save space.  If you want to put it back, it needs to be updated.}

\noindent
\ml{\textbf{Overview of the algorithm.} 
Our approach is loosely inspired by some ideas from the FPT algorithm for two timestamps, which is a reduction to \textsc{Almost 2-SAT}~\cite{DBLP:conf/ijcai/FroeseKZ22}. 
 In the latter, one is given a set of clauses with at most two variables and must delete a minimum number of them so that those remaining are satisfiable. 
We do not use \textsc{Almost 2-SAT} directly, but its usage for two timestamps may help understanding the origins of our techniques and the relevance of our reduction to \textsc{Digraph Pair Cut}.

The reduction from \TimeCover{} on two timestamps to \textsc{Almost 2-SAT} associates each vertex $v_i$ with a variable $x(v_i)$, which is true when one should include $v_i$ in a vertex cover and false otherwise; each edge $u_iv_i$ is associated with a clause $x(u_i) \vee x(v_i)$ (here, $v_i$ represents the occurrence of vertex $v$ at timestamp $i \in \{1, 2\}$).  This corresponds to enforcing the inclusion of $u_i$ or $v_i$ in our vertex cover, and we can include enough copies of this clause to make it undeletable. Since our goal is to minimize the number of base vertices $v$ with both $v_1$ and $v_2$ in the cover, we also add a clause $\neg x(v_1) \vee \neg x(v_2)$. Then there is a temporal cover of $G$ of span at most $k$ if and only if one can delete at most $k$ clauses of the latter form to make all remaining clauses satisfiable.  
Even though this reduction produces clauses with only positive or negative clauses, \TimeCover{} does not appear to be much simpler than \textsc{Almost 2-SAT} in terms of FPT algorithms, and studying the SAT formulation seems more approachable.  

For $T \geq 3$ timestamps, the clauses of the form $x(u_i) \vee x(v_i)$ can still be used model the vertex cover requirements, but there seems to be no obvious way to model the span of a cover. One would need to devise a set of clauses of size two such that choosing an interval of $t$ vertices in a cover corresponds to deleting $t - 1$ negative clauses.  
Our idea is to extend current FPT algorithms for \textsc{Almost 2-SAT} to accommodate our cost function.  In \cite{DBLP:journals/jcss/RazgonO09}, the authors propose an iterative compression FPT algorithm that starts from a solution that deletes $k + 1$ clauses, and modifies it into a solution with $k$ clauses, if possible.  The algorithm relies on several clever, but complicated properties of the dependency graph of the clauses (in which vertices are literals and arcs are implications implied by the clauses). This algorithm seems difficult to adapt to our problem. To our knowledge, the only other FPT algorithm for \textsc{Almost 2-SAT} is that of \cite{DBLP:journals/jacm/KratschW20}. 
 This is achieved through a parameterized reduction to \textsc{Digraph Pair Cut}.  At a high level, the idea is to start from an initial guess of assignment for a well-chosen subset of variables, then to construct the dependency graph of the clauses.  A certain chain of implications is enforced by our initial guess, the vertex pairs to separate correspond to contradictory literals, and deleting arcs corresponds to deleting clauses.  It turns out that, with some work, we can skip the \textsc{Almost 2-SAT} formulation and reduce \TimeCover{} to (a variant of) 
\textsc{Directed Pair Cut} 
directly by borrowing some ideas from this reduction.  This is not immediate though.  The first challenge is that the aforementioned ``well-chosen initial guess'' idea cannot be used in our context, and we must develop new tools to enumerate a bounded number of initial guesses from a partial solution (which we call feasible assignment).  The second challenge is that our reduction to our variant of
\textsc{Directed Pair Cut} needs a specific gadget to enforce our cost scheme, while remaining consistent with the idea of modeling the dependency graph of the \textsc{Sat} instance corresponding to the vertex problem at hand.  
}

%Some of the proofs are omitted due to page limit.

\section{Preliminaries}
\label{sec:preliminaries}

%Let us introduce some notation for graphs and temporal graphs.
%First, 
For an integer $n$, we denote $[n] = \{1,\ldots,n\}$ and
%% NEW\rd{
for two integers $i$, $j$,
with $i < j$,
we denote $[i,j] = \{i, i+1, \dots, j-1,j\}$.
%}
Temporal graphs are defined over a discrete time domain $\TDom$, which is a sequence $1, 2 \dots, T$
of timestamps. 
A temporal graph is also defined over a 
set of vertices, called \emph{base vertices}, that
do not change in the time domain
and are defined in all timestamps,
and are associated with \emph{vertices},
which are base vertices defined in specific timestamps.
We use subscripts to denote the timestamp to which a vertex belongs to,
so, for a base vertex
$v$ and $t \in [T]$, we use $v_t$ to denote the occurrence of $v$ in timestamp $t$.
%is associated with a vertex
%$v_t$, that is $v$ defined in timestamp $t \in [T ]$.
A \emph{temporal edge}
connects two vertices, associated
with distinct base vertices, that belong to the same timestamp.

\begin{definition}
\label{def:TempGraph}
A temporal graph  $G = (V_B,E,\TDom)$ consists of

\begin{enumerate}

\item A time domain $\TDom = \{1,2\ldots,T\}$;

\item A set $V_B$ of \emph{base vertices}; 
$V_B$ has a corresponding set $V(G)$ of \emph{vertices},
which consists of base vertices in specific timestamps, defined as follows:
\[
V(G) = \{ v_t: v \in V_B \wedge t \in [T]\}.
\]

% Each
% base vertex $v \in V_B$ has a corresponding set $V(G)$ 
%\ml{[ML: can we clarify: the way it is phrased suggests that each $v^i$ has its own $V(G)$.  Do we mean that $V(G)$ is the set of all vertices across all times?]} 
%\rd{RD: I changed the definition, hope it looks better}
% of vertices
% that includes, for each base vertex $v^i \in V_B$, $1 \leq i \leq n$, 
% the corresponding
% vertices $v^i_t$ for each timestamp $t$, $1 \leq t \leq T$; 

\item A set $E = E(G)$ of temporal edges, which satisfies:
\[
E \subseteq \{u_t v_t : u, v \in V_B, t \in [T] \wedge u \neq v \}.
\]
% \ml{[ML: the set defined on the right-hand side is the set of all possible edges.  I think we should put $E \subseteq \{ ... \}$]}

\end{enumerate}
\end{definition}

% Given an interval $I$ of $\TDom$, $E(I)$ denotes
% the set of active edges in the timestamps of $I$, that is:
% \[
% E(I)=\{u_t v_t \in E :~ u,v \in V_B \wedge t \in I \}
% \]

For a directed (static) graph $H$, 
we denote by $(u,v)$ an arc from vertex $u$
to vertex $v$ 
(we consider only directed static graphs, not
directed temporal graphs).

Given a temporal graph $G = (V_B, E, \TDom)$ and a set of base vertices
$B \subseteq V_B$, 
we define the set $\vof{B}$ of all vertices of $B$ across all times:
\[
\vof{B} = \{v_t : v \in B \wedge t \in [T]\}.
\]
If $B = \{v\}$, we may write 
$\vof{v}$ instead of $\vof{\{v\}}$. 

% \ml{ML: this was below the problem statement, I brought it back here.  Also I defined $G - W_B$ as a shorthand for $G[V_B \setminus W_B]$.  I you don't like we can remove.}
Given a set $W \subseteq V(G)$, we denote by
$G[W]$ the subgraph induced by vertices
$W$, i.e. $V(G[W]) = W$ and $E(G[W]) = \{u_t v_t \in E : u_t, v_t \in W\}$.
For a subset 
$W_B \subseteq V_B$ of base vertices, we denote $G[W_B] = G[\vof{W_B}]$.
We also use the notation $G - W_B = G[V_B \setminus W_B]$.  Observe that $G[W_B]$ and $G - W_B$ are temporal graphs over the same time domain as $G$.

% \rd{Note that for we denote by $u v $ an edge between two 
% base vertices of an undirected (static) graph,
% and $u_p v_p$ a temporal edge between two vertices
% $u_p$ and $v_p$ at timestamp $p$.
% We denote by $(u,v)$ an arc from vertex $u$
% to vertex $v$ in a directed (static) graph 
% (we consider only directed static graphs not
% directed temporal graphs).}

In order to define the problem we are 
interested in, we need to define 
the \emph{assignment} of a set of base vertices.

\begin{definition}
\label{def:TempGraph}
Consider a temporal graph  $G = (V_B,E,\TDom)$ and a set $W_B \subseteq V_B$ of base vertices.
An \emph{assignment of $W_B$}
is a subset $X \subseteq \vof{W_B}$ such
that  if $u_p \in X$ and
$u_q \in X$, with $p, q \in [T]$,
then $u_t \in X$, for each $t \in [T]$
with $p \leq t \leq q$.
For a base vertex $u \in W_B$ such that there exists $t \in [T]$ with $u_t \in X$, we denote by 
$\delta(u,X)$, $ \Delta(u,X)$, respectively,
the minimum and maximum timestamp, respectively,
such that $u_{\delta(u,X)}, u_{\Delta(u,X)} \in X$.  If $u_t$ does not exist, then $\delta(u, X) = \Delta(u, X) = 0$.
\end{definition}

% Consider a set $Y \subseteq V(G)$ of vertices,
% an \emph{assignment} $X$ to $Y$
% is a subset $X \subseteq Y$ such
% that  if $u_p \in Y$ and
% $u_q \in Y$, with $p, q \in [T]$,
% then $u_t \in Y$, for each $t \in [T]$
% with $p \leq t \leq q$.
% For a base vertex $u$, with 
% $u_p$ in an assignment $X$, we denote by 
% $\delta(u,X)$, $ \Delta(u,X)$, respectively,
% the minimum and maximum timestamp, respectively,
% such that $u_{\delta(v,X)}, v_{\Delta(u,X)} \in X$.

If $W_B$ is clear from the context or not relevant, then we may say that $X$ is an assignment, without specifying $W_B$.
Note that, given an assignment $X$ and
a set $\vof{v}$, for some $v \in V_B$,
then 
$
X \cap \vof{v} = \{ v_t : v_t \in X \wedge 
v_t \in \vof{v}\}$ contains vertices for $v$ that belong to a contiguous interval of timestamps.
Consider a set $I \subseteq [T]$ of timestamps.
An assignment $X$ \emph{intersects} $I$ if there exists $v_t \in X$ such that $t \in I$.
%if, for each base vertex $v \in V_B$ 
%such that
%there exists $v_t \in X$,
%then $X \cap I \neq \emptyset$.}

Now, we give the definition of \emph{temporal cover}.

\begin{definition}
Given a temporal graph $G=(V_B, \TDom, E)$ 
%\ml{[$(V_B, E, \TDom)$]}, 
a \emph{temporal 
cover} of $G$ is an assignment $X$ of $V_B$ such 
that the following properties hold:

\begin{enumerate}

\item For each $v \in V_B$
there exists at least one $v_t \in X$, for some $t \in \TDom$.

% \item For each $v \in V_B$, 
% denote by $\delta(v, X)$ the minimum $t \in [T]$ such that $v_t \in X$, and by $\Delta(v, X)$ the maximum $t \in \TDom$ such that $v_t \in X$; then for each $t \in \TDom$ with $\delta(v, X) \leq t \leq \Delta(v,X)$,
% $v_t$ is in $X$.

\item For each $u_t v_t \in E$, with $t \in [T]$,
at least one of $u_t$, $v_t$ is in $X$.

\end{enumerate}

\end{definition}

For a temporal cover $X$ of $G$, 
the \emph{span} of $v$ in $X$ is defined as:
%\[
$sp(v, X) = \Delta(v, X) - \delta(v, X). $ 
%\ml{$v \rightarrow v^i$?}
%\]
Note that if a temporal cover $X$  contains,
for a base vertex $v \in V_B$,
a single vertex $v_t$, then $sp(v,X) = 0$.
The span of $X$, denoted by $sp(X)$, is then defined as:
\[
sp(X) = \sum_{v \in V_B} sp(v, X).
\]
%\noindent
Now, we are able to define \TimeCover{} (an example is presented in 
Fig.~\ref{fig:ex}).

\begin{problem} $($\TimeCover{}$)$ \\ 
\textbf{Input:} A temporal graph $G = (V_B,\TDom,E)$.\\
\textbf{Question:} Does there exist a temporal cover
of $G$ of span at most $k$?
\end{problem}

A temporal cover $S \subseteq V(G)$ of span at most $k$ will sometimes be called a \emph{solution}.
Our goal is to decide whether $\TimeCover{}$ is FPT in parameter $k$.

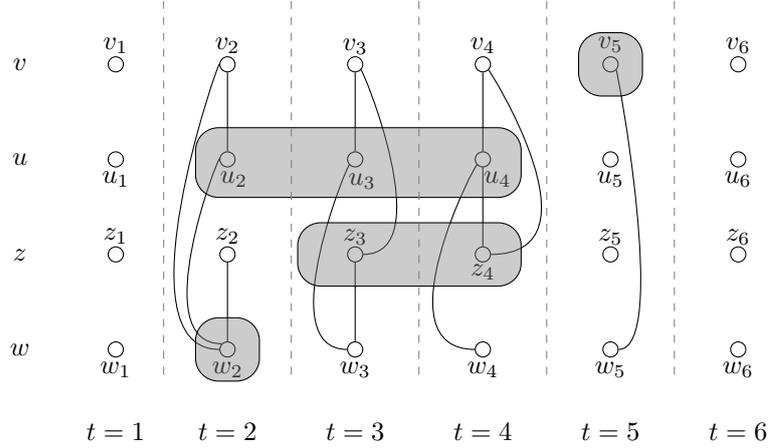
\begin{figure}

  \centering

  \begin{tikzpicture}[scale=0.84]
    \node at (-0.75,0) { $v$};
    \node at (-0.75,-1.5) { $u$};
    \node at (-0.75,-3) { $z$};
    \node at (-0.75,-4.5) { $w$};
               
%%%%%%%%%%%%%%%%%%%%%%%%%%%%%%%%%%%%%%%%%

    \node[main node] (01) at (0.75,0){};
    \node at (0.75,0.3) { $v_1$};

    \node[main node] (02) at (0.75,-1.5) { };
    \node at (0.75,-1.81) { $u_1$};
    
    \node[main node] (03) at (0.75,-3){};
    \node at (0.75,-2.71) {$z_1$};
    
    \node[main node] (04) at (0.75,-4.5){};
    \node at (0.75,-4.82) {$w_1$};

%%%%%%%%%%%%%%%%%%%%%%%%%%%%%%%%%%%%%%%%%

    \node[main node] (1) at (2.5,0){};
    \node at (2.5,0.3) {$v_2$};

    \node[main node] (2) at (2.5,-1.5) { };
    \node at (2.6,-1.81) {$u_2$};
    
    \node[main node] (3) at (2.5,-3){};
    \node at (2.5,-2.71) { $z_2$};
    
    \node[main node] (4) at (2.5,-4.5){};
    \node at (2.5,-4.82) {$w_2$};

    \draw (1.south) -- (2.north);
    
    % \draw (1.south east) .. controls +(down:-1mm) and +(right:15mm) .. (3.east);

    \draw (1.west) .. controls +(down:-1mm) and +(right:-16mm) ..  (4.west);

     \draw (2.west) .. controls +(down:-1mm) and +(right:-12mm) ..  (4.north west);   

%    \draw (2.south) -- (3.north);
    \draw (3.south) -- (4.north);
%%%%%%%%%%%%%%%%%%%%%%%%%%%%%%%%%%%%%%%%%

    \node[main node] (5) at (4.5,0){};
    \node at (4.5,0.28) {$v_3$};
    
    \node[main node] (6) at (4.5,-1.5) { };
    \node at (4.61,-1.84) {$u_3$};

    \node[main node] (7) at (4.5,-3){};
    \node at (4.5,-2.73) {$z_3$};
    
    \node[main node] (8) at (4.5,-4.5){};
    \node at (4.5,-4.82) {$w_3$};

    \draw (5.south) -- (6.north);

    \draw (5.south east) .. controls +(down:-1mm) and +(right:12mm) .. (7.east);

    \draw (6.south west) .. controls +(down:-1mm) and +(right:-12mm) ..  (8.west);

    \draw (7.south) -- (8.north);
    
%%%%%%%%%%%%%%%%%%%%%%%%%%%%%%%%%%%%%%%%% 

    \node[main node] (9) at (6.5,0){};
    \node at (6.5,0.3) {$v_4$};        
    \node[main node] (10) at (6.5,-1.5) { };
    \node at (6.73,-1.81) { $u_4$};
    
    \node[main node] (11) at (6.5,-3){};
    \node at (6.5,-3.27) {$z_4$};
    
    \node[main node] (12) at (6.5,-4.5){};
    \node at (6.5,-4.82) {$w_4$};

    \draw (9.south) -- (10.north);
    
    \draw (10.south) -- (11.north);

    \draw (9.south east) .. controls +(down:-0.3mm) and +(right:+18mm) ..  (11.east);    
    % \draw (9.south east) .. controls +(down:-1mm) and +(right:15mm) .. (12.east);

    \draw (10.south west) .. controls +(down:-1mm) and +(right:-15mm) ..  (12.west);

 %   \draw (11.south) -- (12.north);

%%%%%%%%%%%%%%%%%%%%%%%%%%%%%%%%%%%%%%%%%%%%%

    \node[main node] (13) at (8.5,0){};
    \node at (8.5,0.31) {$v_5$};        
    \node[main node] (14) at (8.5,-1.5) { };
    \node at (8.5,-1.81) {$u_5$};
    
    \node[main node] (15) at (8.5,-3){};
    \node at (8.5,-2.73) {$z_5$};
    
    \node[main node] (16) at (8.5,-4.5){};
    \node at (8.5,-4.82) {$w_5$};

    \draw (13.south east) .. controls +(down:-1mm) and +(right:8mm) .. (16.east);

%%%%%%%%%%%%%%%%%%%%%%%%%%%%%%%%%%%%%%%%%%%%%%%%

    \node[main node] (1b) at (10.5,0){};
    \node at (10.5,0.3) { $v_6$};        
    \node[main node] (2b) at (10.5,-1.5) { };
    \node at (10.5,-1.81) {$u_6$};
    
    \node[main node] (3b) at (10.5,-3){};
    \node at (10.5,-2.73) {$z_6$};
    
    \node[main node] (4b) at (10.5,-4.5){};
    \node at (10.5,-4.82) {$w_6$};

\draw[fill=gray, fill opacity=0.4, rounded corners=2ex] (8,-0.5) rectangle (9,0.5);

\draw[fill=gray, fill opacity=0.4, rounded corners=2ex] (2,-1) rectangle (7.1,-2.1);

\draw[fill=gray, fill opacity=0.4, rounded corners=2ex] (3.6,-2.5) rectangle (7.1,-3.5);

\draw[fill=gray, fill opacity=0.4, rounded corners=2ex] (2,-4) rectangle (3,-5);

\draw[dashed,thin,gray] (1.5,1)--(1.5,-5);
\draw[dashed,thin,gray] (3.5,1)--(3.5,-5);
\draw[dashed,thin,gray] (5.5,1)--(5.5,-5);
\draw[dashed,thin,gray] (7.5,1)--(7.5,-5);
\draw[dashed,thin,gray] (9.5,1)--(9.5,-5);

\node at (0.75,-5.8){$t=1$};
\node at (2.5,-5.8){$t=2$};
\node at (4.5,-5.8){$t=3$};
\node at (6.5,-5.8){$t=4$};
\node at (8.5,-5.8){$t=5$};
\node at (10.5,-5.8){$t=6$};

\end{tikzpicture}

\caption{An example of \TimeCover{} on a temporal graph 
$G$ consisting of four 
base vertices %($v$, $u$, $z$, $w$)
and six timestamps. %($t=1,2,3,4,5,6$).
For each timestamp, we draw the temporal edges of $G$, 
for example for $t=2$, the temporal edges
are $v_2 u_2$,
$v_2 w_2$, $u_2 w_2$,
$z_2 w_2$.
Also note that in $t=1$ and $t=6$ no temporal edge is defined.
A temporal cover 
 $X = \{ v_5, u_2, u_3, u_4, z_3, z_4, w_2\}$
 is represented with grey rectangles.
Note that $\delta(v,X) = \Delta(v,X) = 5$,
$\delta(u,X) =2$, $\Delta(u,X) = 4$,
$\delta(z,X) = 3$,  $\Delta(z,X) = 4$,
$\delta(w,X) = \Delta(w,X) = 2$.
It follows that $sp(X) = 3$.
}
\label{fig:ex}
\end{figure}

\section{An FPT Algorithm}
\label{sec:FPT}

In this section we present our FPT algorithm,
which consists of two parts:

\begin{enumerate}

\item The iterative compression technique.

\item A reduction to the \DigraphCut{} problem.

\end{enumerate}

Before presenting the main steps of our algorithm,
we present the main idea and some definitions.
% We assume that the based vertices $V_B$ are ordered (based
% on some random ordering) and we denote by 
% $G[i]=(V_B[i], \TDom, E[i])$ the subgraph of $G$ induced by the first $i$ base vertices.
% Note that $V_B[i]$ denotes the set of the first $i$ 
% base vertices, and $E[i]$ denotes the temporal
% edges induced by the vertices associated
% with base vertices in $V_B[i]$. Note that
% $V(G[i])$ denotes the set of vertices
% associated with $V_B[i]$.
Recall that our parameter, that is the span of a solution of \TimeCover{}, is denoted by $k$.

Consider a temporal graph $G$
and assume we have a temporal cover $S$ of span at most $k$ of the subgraph $G - \{w\}$, for some base vertex $w \in V_B$.
The idea of the iterative compression step
is, starting from $S$, to
show how to decide
in FPT time 
whether there exists a solution of 
\TimeCover{} for $G$. 
This is done by 
solving a subproblem, called
$\ResTimeCover{}$, where we must modify $S$ to consider $w$.
A solution to this subproblem
is computed by branching
on the assignments of base
vertices having a positive span in
$S$ and on $w$, and then reducing
the problem to \DigraphCut{}.
$\ResTimeCover{}$ is defined as follows.

\begin{problem} $($\ResTimeCover{}$)$ \\ 
\textbf{Input:} A temporal graph $G = (V_B,\TDom,E)$, a vertex $w \in V_B$, 
a temporal cover $S$ of
$G - \{w\}$ of span at most
$k$.\\
\textbf{Output:} Does there exist a temporal cover
of $G$ of span at most $k$?
\end{problem}

For technical reasons that will become apparent later, we will assume that the temporal graph contains no edge
at timestamps $1$ and $T$, i.e. $G[\{v_1, v_T : v \in V_B\}]$ is an edgeless graph (as in Fig.~\ref{fig:ex}).  
It is easy to see that if this is not already the case, we can add two such ``dummy'' timestamps, where $G$ does not
contain any temporal edge.
Indeed, since there are no temporal edges
in these two timestamps,
then $G$ has a temporal cover of span at most $k$ if and only the same graph with dummy timestamps has a temporal cover of span at most $k$.  
%(todo: might need rephrasing)

Informally, if we are able to solve \ResTimeCover{} in FPT time, then we can obtain an FPT algorithm for \TimeCover{} as well.  
Indeed, we can first compute a
temporal cover on a small subset of base vertices
(for example a single vertex), 
and then we can add, one at a time, the other vertices of the graph.  This requires at most $|V_B|$ iterations, and each time a vertex is added, we compute a solution of \ResTimeCover{} to check whether
it is possible to find a temporal
cover of span at most $k$ after the addition of a vertex.

\subsection*{Iterative Compression}

We now present our approach based on iterative
compression to solve
the \ResTimeCover{} problem.
Given a solution $S$ for $G - \{w\}$, we focus on the
vertices of $V_B$ that have
a positive span in $S$ and vertex $w$.
An example of our approach, that
illustrates the sets of base vertices and vertices used by the
algorithm,
is presented in Fig.~\ref{fig:ItCompr1}.

Consider the input of \ResTimeCover{}
that consists of a temporal graph 
$G = (V_B,\TDom,E)$, 
a vertex $w \in V_B$, and
a temporal cover $S$ of $G - \{w\}$ of span at most $k$.
Define the following sets associated with $S$:
\begin{align*}
V_{S} &= \{ v \in V_B : \exists p, q \in [T]
, p < q, \mbox{ such that $v_p, v_q  \in S$ } \} \cup \{  w \} \\
V'_S &= \{ v_t : v_t \in S, v \in V_S \setminus\{ w \} \}
\cup \{ w_t:  t \in [T]
\}.
\end{align*}

% \rd{Maybe we can use $V_{B,S}$ for the base vertices
% and $V_S$ for the vertices: what do you think?}
% \ml{Now that you mention it, I'd rename $V_S'$ to $S'$.  That way, there'd be less subscripts and stuff, and notation $V_X$ would always correspond to base vertices, and when there's no $V$ it could correspond to actual vertices.}

% \ml{here, $V_S$ uses the notation $p \in [T]$.  Other places would use $1 \leq p \leq T$ instead.  We need to choose one way and be consistent.  Personally I'd use the $[T]$ notation and use it throughout, but if you prefer the other way let me know --- in that case we should remove $[T]$}

% \rd{I'm fine with $[T]$. I have modified
% accordingly some part, but if you see other
% places where we use $1 \leq p \leq T$, please
% change them}

The set $V_S$ is defined as the set of base vertices having span greater
than $0$ in $S$, plus vertex $w$. 
$V'_S$ contains the vertices  
in $V(G)$ associated with $V_S$, in particular: 
(1) the vertices 
corresponding to the base vertices in
$V_S \setminus \{ w \}$ that are included in $S$
and (2) vertices corresponding to the base vertex $w$
in every timestamp.

Define the following set $I_S$ of timestamps 
associated with $V_S \setminus \{ w \}$:
\[
I_S = \{t~\in [T]:  u_t \in V'_{S}
\text{
for some $u \in V_S \setminus \{ w \}$  
} \}.
\]

Essentially, $I_S$ contains those timestamps 
where the base vertices of 
$V_S \setminus \{ w \}$, 
that is of span greater than zero,
have associated vertices in $S$. These timestamps are essential
for computing a solution of \ResTimeCover{},
that is to compute whether there exists
a temporal cover of $G[V_B]$ of span
at most $k$ starting from $S$.
We define now the sets of base vertices
and vertices associated
with $S$ and having a span equal to $0$:
\begin{align*}
Z_{S} &=  V_B \setminus V_S  \quad \quad Z'_S =  S \setminus V'_S.
\end{align*}
First, we show two easy properties of $S$ 
and $I_S$ on the
temporal graph $G - \{w\}$.

\begin{lemma}
\label{lem:easybound}
Let $S$ be a solution of \TimeCover{} on instance $G -\{w\}$
and let $I_S$ be the associated set of 
timestamps.
Then $|I_S| \leq 2k$.
\end{lemma}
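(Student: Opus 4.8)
The plan is to bound $|I_S|$ by relating the timestamps in $I_S$ to the span that $S$ pays on vertices of $V_S \setminus \{w\}$. Recall that every base vertex $u \in V_S \setminus \{w\}$ has, by definition, span $sp(u,S) \geq 1$, i.e. $u$ has associated vertices in $S$ spread over at least two distinct timestamps. First I would observe that, by the definition of an assignment, for each such $u$ the set $S \cap \vof{u}$ occupies a contiguous block of timestamps, namely all of $[\delta(u,S), \Delta(u,S)]$, and the number of such timestamps is exactly $sp(u,S) + 1$. Consequently the set of timestamps where $u$ contributes a vertex to $V'_S$ has size $sp(u,S)+1 \leq 2\,sp(u,S)$, where the inequality uses $sp(u,S) \geq 1$.

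Next I would take the union over all $u \in V_S \setminus \{w\}$. Since $I_S = \bigcup_{u \in V_S \setminus \{w\}} [\delta(u,S), \Delta(u,S)]$, we get
\[
|I_S| \;\leq\; \sum_{u \in V_S \setminus \{w\}} \bigl(sp(u,S)+1\bigr) \;\leq\; \sum_{u \in V_S \setminus \{w\}} 2\,sp(u,S) \;=\; 2 \sum_{u \in V_S \setminus \{w\}} sp(u,S).
\]
Finally, since $w \notin V_B \setminus \{w\}$ trivially and $S$ is a temporal cover of $G - \{w\}$, every base vertex counted in the last sum lies in $V_B \setminus \{w\}$, so $\sum_{u \in V_S \setminus \{w\}} sp(u,S) \leq \sum_{v \in V_B \setminus \{w\}} sp(v,S) = sp(S) \leq k$. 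Combining, $|I_S| \leq 2k$, as claimed.

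I do not anticipate a genuine obstacle here; the only point requiring a little care is making sure the $+1$ per vertex is absorbed correctly — this is exactly where the hypothesis that each $u \in V_S \setminus \{w\}$ has \emph{positive} span is used, so that $sp(u,S)+1 \leq 2\,sp(u,S)$ rather than merely $sp(u,S)+1$. One should also note that the vertex $w$ itself contributes nothing to $I_S$ by construction (the definition of $I_S$ ranges only over $u \in V_S \setminus \{w\}$), so $w$ can be ignored throughout.
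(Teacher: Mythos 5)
Your proof is correct and follows essentially the same counting argument as the paper: each $u \in V_S \setminus \{w\}$ occupies $sp(u,S)+1$ contiguous timestamps, and since each such $u$ has positive span this is at most $2\,sp(u,S)$, giving $|I_S| \leq 2\,sp(S) \leq 2k$. The paper states the same bound more tersely (via $|V_S \setminus \{w\}| \leq k$ together with total span at most $k$), so there is no substantive difference.
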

% \begin{proof}
% The result follows from the fact that,
% since $|V_S \setminus \{ w\}| \leq k$,
% $2k$ timestamps contribute at least $k$
% to the span of $S$. 
% \end{proof}

\begin{lemma}
\label{lem:easypart}
Let $S$ be a solution of \TimeCover{} on instance $G -\{w\}$.
Then, $sp(Z'_S) = 0$.  Moreover,
$Z'_S$
covers each temporal edge of $G -\{w\}$ not covered
by $V'_{S} \setminus \vof{w}$. 
%\ml{[ML: was $V'_S \setminus \{w\}$ before.]}.
\end{lemma}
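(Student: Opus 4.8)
\textbf{Proof plan for Lemma~\ref{lem:easypart}.}
The plan is to prove the two assertions separately, both directly from the definitions of $Z_S$, $Z'_S$, $V_S$, and $V'_S$. First, for the span claim: by definition $Z_S = V_B \setminus V_S$, and $V_S$ contains exactly the base vertices that have span greater than $0$ in $S$ (together with $w$). Hence every base vertex $v \in Z_S$ has $sp(v, S) = 0$, i.e. $v$ occurs in $S$ in at most one timestamp. I would then observe that $Z'_S = S \setminus V'_S$ consists precisely of those vertices $v_t \in S$ whose base vertex $v$ lies in $Z_S$: indeed $V'_S$ collects, among the vertices of $S$, exactly the occurrences whose base vertex is in $V_S \setminus \{w\}$, plus all occurrences of $w$; so removing $V'_S$ from $S$ leaves the occurrences of base vertices in $V_B \setminus V_S = Z_S$ (and none of $w$, since all of $\vof{w}$ is in $V'_S$). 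Since each such $v \in Z_S$ contributes at most one vertex to $Z'_S$, we get $sp(v, Z'_S) = 0$ for all $v$, hence $sp(Z'_S) = \sum_{v} sp(v, Z'_S) = 0$. (One should note $Z'_S$ is indeed an assignment of $Z_S$, which is immediate since each base vertex has at most one occurrence in it.)

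For the covering claim, I would take an arbitrary temporal edge $u_t v_t \in E(G - \{w\})$ that is not covered by $V'_S \setminus \vof{w}$, and show it is covered by $Z'_S$. Since $S$ is a temporal cover of $G - \{w\}$, at least one endpoint, say $u_t$, lies in $S$. Now $S = V'_S \,\dot\cup\, Z'_S$ is a partition (by construction of $Z'_S$ as $S \setminus V'_S$), and moreover $u_t \notin \vof{w}$ because $u_t v_t$ is an edge of $G - \{w\}$, so neither endpoint is an occurrence of $w$. Therefore $u_t \in V'_S \setminus \vof{w}$ or $u_t \in Z'_S$. The first case is excluded by our assumption that the edge is not covered by $V'_S \setminus \vof{w}$. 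Hence $u_t \in Z'_S$, so $Z'_S$ covers $u_t v_t$, as desired.

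I do not expect any serious obstacle here; the lemma is essentially a bookkeeping statement unpacking the set definitions. The one point that deserves care is the claim that $Z'_S = S \setminus V'_S$ equals exactly the occurrences in $S$ of base vertices in $Z_S$, which relies on the fact that $V'_S \cap S$ is precisely $\{v_t \in S : v \in V_S \setminus \{w\}\} \cup (\vof{w} \cap S)$ — in particular every occurrence of $w$ present in $S$ is already in $V'_S$ since $V'_S \supseteq \vof{w}$. Once that identification is made, both parts follow immediately, and the only remaining subtlety is to double-check that $Z'_S$ legitimately qualifies as an assignment (Definition~\ref{def:TempGraph}), which is trivial because no base vertex has two occurrences in $Z'_S$.
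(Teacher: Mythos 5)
Your proof is correct and follows essentially the same route as the paper's (much terser) argument: both parts are direct bookkeeping from the definitions of $V_S$, $V'_S$, $Z_S$, $Z'_S$, with the span claim coming from the fact that base vertices outside $V_S$ have at most one occurrence in $S$, and the covering claim from the fact that every vertex of $S$ lies in $V'_S$ or in $Z'_S = S \setminus V'_S$. The only cosmetic slip is calling $S = V'_S \,\dot\cup\, Z'_S$ a partition (since $\vof{w} \subseteq V'_S$ but $\vof{w} \cap S = \emptyset$, the correct statement is $S = (V'_S \cap S) \,\dot\cup\, Z'_S$), but the inference you actually use is valid.
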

% \begin{proof}
% % Consider $S$ and define then $Z$ as follows:
% % \[
% % Z =\{ u \in G[V_B \setminus \{w\}]: \exists! u_p \in S \}.
% % \]
% Since $S$ is a cover of $G[V_B \setminus \{w\}]$, it follows
% that the temporal edges not covered by $V'_{S} \setminus \{w\}$ must be covered by vertices of 
% $Z'_S$. 
% Furthermore, by the definition of $Z'_{S}$, 
% % since $Z$, for each base vertex, contains 
% % a single associated vertex, 
% it follows that $sp(Z'_S)=0$.
% \end{proof}

% \ml{note: maybe mention that $Z$ can be found in poly-time, unless we already say it elsewhere.}

% \rd{Sure, we will mention somewhere.
% But if $S$ is cover, it contains
% $Z$ -- those base vertices 
% with exactly one vertex in $S$ --
% so that's all we need tight? Not that it
% is possible to compute whether there exists
% a solution of span $0$?}

% Note that since $S$ covers $G$, then 
% $G - V'_S$ contains a temporal cover of cost $0$.
%Moreover, by construction $|V_S| \leq k+1$. 
% only contains clauses of the form $\ov{v}_p \vee \ov{v}_q$, $|V_S| \leq |S| \leq 3k/2$.  %It then follows that $|V'_S| \leq 9k/2$.

Now, we introduce the concept
of feasible assignment, which
is used to ``guess'' how $S$ is rearranged in a solution
of \ResTimeCover{}.
\rd{Recall that an assignment $X$ intersects 
a set $I_S$ of timestamps
if there exists $v_t \in X$ such that $t \in I_S$.}

\begin{definition}[Feasible assignment]
\label{def:feasibleAssignment}
Consider an instance of \ResTimeCover{} 
that consists of
a temporal graph $G = (V_B,\TDom,E)$, 
a vertex $w \in V_B$, 
a temporal cover $S$ of
$G -\{w\}$ of span at 
most $k$, and 
sets $V_S, V'_S$ and $I_S$ associated 
with $S$.
We say that an assignment 
$X \subseteq \vof{V_S}$ of $V_S$ 
is a \emph{feasible assignment} (with respect to $G, S$, and $I_S$) if all of the following conditions hold:

\begin{enumerate}

\item 
the span of $X$ is at most $k$;

\item 
every edge of $G[V_S]$ 
% \rd{to do: it should be $G[V_S]$, or we have
% to define $G[V'_S]$)}
is covered by $X$;

\item 
$X \cap \vof{w}$ is a non-empty assignment of $\{w\}$;

% \item 
% $X \cap \vof{w}$ is a \emph{non-empty} assignment
% of $\{w\}$;
% \rd{I modified to empty assignment;(previously was interval); we define interval
% over all timestamps; same at point 4}
% % interval of $\vof{w}$;

\item 
for every $v \in V_S \setminus \{w\}$, at least one of the following holds: (1) $X \cap \vof{v}$ is empty; (2) $X \cap \vof{v}$ is an assignment of $\{v\}$ that intersects with $I_S$; 
or (3) $X \cap \vof{v}$ contains a vertex $v_t$ such that $v_t w_t \in E$ and $w_t \notin X \cap \vof{w}$.
\end{enumerate}

% \ml{(todo, define ``interval of'' and ``intersects with $I_S$'')};

% \rd{The first three points are fine, but I have some doubts about the last one: in this way every base vertex have one vertex that belongs to $X$, right? 
% But then $N_S(x)$ is empty.
% Or $X \cup (V(v)$ can be empty?
% }  \ml{ML: agreed, please check if ok now.}
% \rd{I think it's fine now}

% \ml{ML: other comment: I realize that you had defined the extended interval $I_S^*$, and I cannot decide if we need it.  I think we do not, but I'm not sure.  I suggest we go on with the proofs, and if we see it is required we can add it.}

% \rd{I think we don't need it, the definition of interval that intersect 
% should be fine}

Given a feasible assignment $X$, we denote 
\[
M_S(X) = \{v \in V_S : X \cap \vof{v} \neq \emptyset\} \quad \quad \quad  N_S(X) = \{v \in V_S : X \cap \vof{v} = \emptyset\}
\]
\end{definition}

\rd{Informally, notice that point 4 considers
the possible cases for a feasible assignment of the vertices of
a base vertex $v \in V_S \setminus \{w\}$ : none of the associated vertices
in $I_S$ belongs to the computed solution (case 4.(1)),
or some of its associated vertices in $I_S$
belongs to the solution, case 4.(2) and case 4.(3),
where the latter case is forced by the need
of covering temporal edge $v_t w_t$,
with $t \in I_s$, not covered by $w_t$. }

Note that $M_S(X)$ and $N_S(X)$ form a partition of $V_S$.
Also note that $G, S$, and $I_S$ are fixed in the remainder, so we assume that all feasible assignments are with respect to $G, S$, and $I_S$ without explicit mention.  
We now relate feasible assignments to temporal covers.

\begin{definition}
    Let $X^*$ be a temporal cover of $G$ and let $X$ be a feasible assignment.  We say that $X^*$ \emph{agrees} with $X$ if:
    \begin{itemize}
        \item 
        for each $v \in M_S(X)$, $X^* \cap \vof{v} = X \cap \vof{v}$;
        
        \item 
        for each $v \in N_S(X)$ and each $t \in I_S$, $X^*$ contains every neighbor $u_t$ of $v_t$ such that $u_t \in \vof{Z_S}$.
        
    \end{itemize}
\end{definition}

The intuition of $X^*$ agreeing with $X$ is as follows.  For $v \in M_S(X)$, $X$ ``knows'' which vertices of $\vof{v}$ should be in the solution, and we require $X^*$ to contain exactly those.  
For $v \in N_S(X)$, we interpret that $X$ does not want any vertex $v_t$ with $t \in I_S$.  Thus, to cover the edges incident to $v_t$ that go outside of $V_S$, we require $X^*$ to contain the other endpoint.  Note an important subtlety: we act ``as if'' $X^*$ should not contain $v_t$ or other vertices of $N_S(X)$ with timestamp in $I_S$, but the definition does not forbid it.  Hence, $X^*$ \emph{can} contain a vertex of $N_S(X)$ in some timestamps of $I_S$, as long as 
$X^*$ contains also its
neighbors (in $I_S$) outside $V_S$.
% it still contains their 

The main purpose of feasible assignments and agreement is as follows.

\begin{lemma}
\label{lem:CorrectBranch}
    Let $X^*$ be a temporal cover of $G$ of span at most $k$.  Then there exists a feasible assignment $X$ such that $X^*$ agrees with $X$.
\end{lemma}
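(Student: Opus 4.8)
The plan is to extract the feasible assignment $X$ directly from $X^*$ by restricting $X^*$ to the base vertices of $V_S$, and then to verify that the restriction satisfies all four conditions of Definition~\ref{def:feasibleAssignment}, as well as both bullets of the agreement definition. Concretely, I would set $X := X^* \cap \vof{V_S}$, except possibly with a small correction on $\vof{w}$: since $X^*$ is a temporal cover of $G$ it contains at least one vertex $w_t$, so $X^* \cap \vof{w}$ is already a nonempty assignment of $\{w\}$, and no correction is needed there. For each $v \in V_S\setminus\{w\}$ with $X^*\cap\vof{v}\neq\emptyset$, $X^*\cap\vof{v}$ is a contiguous interval of timestamps (since $X^*$ is an assignment), so $X$ is a genuine assignment of $V_S$.

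First I would check conditions 1--3. Condition~1: $sp(X) \le sp(X^*) \le k$ because $X\subseteq X^*$ and restricting to a subset of base vertices only removes per-vertex span contributions. Condition~2: every edge of $G[V_S]$ has both endpoints in $\vof{V_S}$, and $X^*$ covers it, so the covering endpoint lies in $X^*\cap\vof{V_S}=X$. Condition~3 is the nonemptiness of $X\cap\vof{w}$, already argued above. The main work is condition~4: fix $v\in V_S\setminus\{w\}$ and suppose $X\cap\vof{v}\neq\emptyset$ (otherwise case 4.(1) holds). If $X\cap\vof{v}$ intersects $I_S$ we are in case 4.(2), so assume it does not. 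Then I need to produce a timestamp $t$ with $v_t\in X$, $v_tw_t\in E$, and $w_t\notin X\cap\vof{w}$ (case 4.(3)). Here is where I would use the fact that $v\in V_S$, i.e. $v$ has positive span in the \emph{original} solution $S$, together with the structure forced by $S$: since $v$ has positive span in $S$, there are at least two timestamps $p<q$ with $v_p,v_q\in S$, both lying in $I_S$ by definition of $I_S$. I expect the argument to be that if $X$ gives $v$ only timestamps outside $I_S$ while $S$ needed $v$ inside $I_S$, the only reason $X^*$ can afford to drop $v$ there without blowing the budget is that some edge $v_tw_t$ at a timestamp $t\in I_S$ is present and $w_t$ is \emph{not} chosen by $X^*$ (hence not by $X\cap\vof w$) — so some still-uncovered edge forces $v_t$ in, contradicting $v_t\notin X$ unless exactly case 4.(3) is witnessed. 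Making this precise is the crux, and it is where I would need to lean on how the dummy timestamps $1,T$ (edge-free) and the span bound interact; I would argue by contradiction, assuming no such $t$ exists, and show that then $X$ could be shrunk on $\vof{v}$ to an empty set while $X^*$ still covers everything, contradicting minimality or the way $I_S$ captures exactly the ``load-bearing'' timestamps of $v$.

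Next I would verify agreement. The first bullet, $X^*\cap\vof{v}=X\cap\vof{v}$ for $v\in M_S(X)$, is immediate from the definition $X=X^*\cap\vof{V_S}$ and $M_S(X)=\{v\in V_S: X\cap\vof v\neq\emptyset\}$. The second bullet requires: for $v\in N_S(X)$ (so $X\cap\vof v=\emptyset$, meaning $X^*\cap\vof v\cap\vof{V_S}=\emptyset$, i.e.\ $X^*$ contains no $v_t$ at all since $\vof v\subseteq\vof{V_S}$) and each $t\in I_S$, $X^*$ must contain every neighbor $u_t\in\vof{Z_S}$ of $v_t$. This follows because the edge $u_tv_t\in E$ must be covered by $X^*$, and $v_t\notin X^*$, so $u_t\in X^*$; and $u_t\in\vof{Z_S}$ by hypothesis, so indeed $X^*$ contains it.

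\textbf{Main obstacle.} The routine parts are conditions 1--3 and the agreement bullets. The real obstacle is condition~4 case 4.(3): ruling out that $X$ assigns $v$ only to timestamps \emph{outside} $I_S$ with no edge-to-an-unchosen-$w$ witness inside $I_S$. The delicate point is that the definition of feasible assignment is tailored so that $X$ ``looks like'' a legal rearrangement of $S$, and one must show the genuine cover $X^*$ cannot do something the definition disallows without a witnessing edge $v_tw_t$; I anticipate this needs a careful case analysis on the timestamps where $v$ appears in $X^*$ versus in $S$, using $|I_S|\le 2k$ (Lemma~\ref{lem:easybound}) only implicitly and the covering properties of $Z'_S$ (Lemma~\ref{lem:easypart}) to locate exactly which edges at timestamps of $I_S$ can fail to be covered without $v$.
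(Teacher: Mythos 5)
There is a genuine gap, and it is exactly at the point you flag as the crux. Your construction $X := X^* \cap \vof{V_S}$ cannot satisfy Condition~4 in general, and the contradiction you hope to derive does not exist. Since $X^*$ is a temporal cover, it must contain at least one vertex of $\vof{v}$ for \emph{every} base vertex $v$ (first condition of a temporal cover), and nothing prevents it from placing $v \in V_S\setminus\{w\}$ only at timestamps outside $I_S$ where $v$ has no edge to an unchosen $w_t$ --- for instance at the edgeless dummy timestamp $1$, or at timestamps where its interval only covers edges to $\vof{Z_S}$-vertices. For such a $v$ your $X\cap\vof{v}$ is non-empty, misses $I_S$, and has no witness edge, so none of 4.(1)--4.(3) holds and $X$ is simply not feasible. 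Your fallback (``show $X$ could be shrunk on $\vof{v}$ to the empty set, contradicting minimality'') has no minimality to contradict: $X^*$ is an arbitrary cover of span at most $k$. Also note that with your $X$, the set $N_S(X)$ is forced to be empty (every $v\in V_S$ has some $v_t\in X^*$), which is a sign the construction is not capturing the intended ``$X$ drops $v$'' branches at all.

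The paper's proof fixes precisely this by making the construction selective rather than a plain restriction: it puts $X^*\cap\vof{w}$ into $X$, and for $v\in V_S\setminus\{w\}$ it adds $X^*\cap\vof{v}$ \emph{only if} that set intersects $I_S$ or contains a vertex $v_t$ with $v_tw_t\in E$ and $w_t\notin X^*\cap\vof{w}$; otherwise $X\cap\vof{v}$ is left empty, so $v$ lands in $N_S(X)$ and Condition~4 holds by construction. The burden then shifts to Condition~2, which is no longer the one-liner you give: if an edge $u_tv_t\in E(G[V_S])$ with $u,v\neq w$ were covered by $X^*$ via $u_t$ but $u$ was not added to $X$, then $X^*\cap\vof{u}$ misses $I_S$, so $t\notin I_S$; but $S$ covers $G-\{w\}$ and touches $\vof{u},\vof{v}$ only at timestamps of $I_S$, so $S$ could not cover $u_tv_t$ --- contradiction. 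Agreement for $v\in N_S(X)$ then follows because, by the selectivity, $X^*$ contains no $v_t$ with $t\in I_S$, forcing all its $\vof{Z_S}$-neighbours into $X^*$. So the missing idea is not a cleverer contradiction but a different definition of $X$; as written, your argument would fail on легitimate covers $X^*$ and cannot be repaired without changing the construction.
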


\begin{proof}
    Construct $X \subseteq X^*$ as follows: 
    add $X^* \cap \vof{w}$ to $X$, and for $v \in V_S \setminus \{w\}$, add $X^* \cap \vof{v}$ to $X$ if and only if $X^* \cap \vof{v}$ intersects with the set $I_S$, or if it contains a vertex $v_t$ incident to an edge $v_tw_t \in E$ such that $w_t \notin X^* \cap \vof{w}$.  Note that since $X^*$ is an assignment of $V_B$, $X$ is an assignment of $V_S$.

    We first focus on arguing that $X$ satisfies each condition of a feasible assignment    (Definition~\ref{def:feasibleAssignment}).  
    For Condition 1, since $X^*$ has span at most $k$ and $X \subseteq X^*$, it is clear that $X$ also has span at most $k$.
    For Condition 3, $X^* \cap \vof{w}$ is non-empty by the definition of a temporal cover, and we added $X^* \cap \vof{w}$ to $X$.
    For Condition 4, we explicitly require in our construction of $X$ that for each $v \in V_S \setminus \{w\}$, if $X \cap \vof{v}$ is non-empty, then it is equal to $X^* \cap \vof{v}$ and it either intersects with $I_S$ or covers an edge not covered by $X \cap \vof{w} = X^* \cap \vof{w}$.

    Let us focus on Condition 2.  Let $u_t v_t \in E(G[V_S])$.  If $u = w$, then if we did not add $w_t$ to $X$, $X^*$ must contain $v_t$ and we added $X^* \cap \vof{v}$ to $X$, thereby covering the edge.  The same holds if $v = w$.  Assume $u \neq w, v \neq w$, and suppose without loss of generality that $X^*$ contains $u_t$ to cover the edge.  
    Suppose for contradiction that $X$ does not cover $u_tv_t$.  Then we did not add $X^* \cap \vof{u}$ to $X$, which implies that $X^* \cap \vof{u}$ does not intersect with $I_S$.  In particular, $t \notin I_S$.  Recall that $S$, the temporal cover of $G - \{w\}$, only intersects with $\vof{u}$ and $\vof{v}$ in timestamps contained in $I_S$.  Hence, $S$ cannot cover $u_tv_t$, a contradiction.  We deduce that $X$ covers every edge.  Therefore, $X$ is a feasible assignment.

    It remains to show that $X^*$ agrees with $X$.  
    For $v \in M_S(X)$, $X^* \cap \vof{v} = X \cap \vof{v}$ by the construction of $X$.
    For $v \in N_S(X)$, there is no $v_t \in X^*$ with $t \in I_S$, as otherwise we would have added $X^* \cap \vof{v}$ to $X$.  For every such $v_t$, $X^*$ must contain all of its neighbors in $\vof{Z_S}$ to cover the edges, as required by the definition of agreement. 
\end{proof}

It remains to show that the number of feasible assignments has bounded size and can be enumerated efficiently.  We first show the latter can be achieved through the following steps.
Start with $X$ as an empty set and
then apply the following steps:

\begin{enumerate}

\item[(1)] Branch into every non-empty assignment $X_w$ of $\{w\}$ of span at most $k$.  In each branch, add the chosen subset $X_w$ to $X$;

\item[(2)] For every edge $v_t w_t \in E(G[V_S])$ such that $w_t \notin X_w$, add $v_t$ to $X$;

\item[(3)] For every 
$v \in V_S \setminus \{ w \}$,
such that $X \cap \vof{v} = \emptyset$ at this moment, branch into $|I_S| + 1$ options: either add no vertex of $\vof{v}$ to $X$, or choose a vertex $v_t$ and add it to $X$, where $t \in I_S$;

\item[(4)] For every $v \in V_S \setminus \{w\}$ such that $X \cap \vof{v} \neq \emptyset$ at this moment, branch into every assignment $X_v$ of $\{v\}$ of span at most $k$ that contains every vertex of $X \cap \vof{v}$ (if no such assignment exists, abort the current branch).  For each such branch, add every vertex of $X_v \setminus X$ to $X$.

\end{enumerate}

\begin{theorem}
\label{teo:enumerateFA}
    The above steps enumerate every feasible assignment in time 
    $O(2^{4 k \log k} T^3 n)$, where $n = |V_B|$.
\end{theorem}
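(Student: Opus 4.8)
The plan is to argue two things: (i) correctness — the four-step branching procedure produces exactly the set of feasible assignments; and (ii) the running time bound. For correctness I would first check that any leaf $X$ of the branching tree that is not aborted satisfies all four conditions of Definition~\ref{def:feasibleAssignment}. Condition~3 is guaranteed by step~(1), which only picks non-empty assignments of $\{w\}$ of span at most $k$. Condition~2 restricted to edges incident to $w$ is forced by step~(2); the remaining edges of $G[V_S]$ involve two base vertices of $V_S \setminus \{w\}$, and I would argue these are covered because step~(4) only keeps assignments $X_v$ of span at most $k$ and because every such edge lives in a timestamp of $I_S$ (by definition of $I_S$ and the fact that $S$ covers $G-\{w\}$) — so the vertex added for $v$ or for its neighbour in step~(3) lands in $I_S$ and covers it. Condition~4 is immediate from the case split: after step~(3), for $v \in V_S\setminus\{w\}$ either $X\cap\vof{v}$ is empty (case 4.(1)), or it was populated in step~(2) by a $v_t$ with $v_tw_t\in E$, $w_t\notin X_w$ (case 4.(3)), or it was populated in step~(3) by a $v_t$ with $t\in I_S$ (case 4.(2)); step~(4) then only enlarges $X\cap\vof{v}$ to a contiguous interval of span $\le k$, preserving whichever of (2)/(3) held. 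Condition~1 needs the observation that the spans of distinct base vertices are chosen independently in steps (1) and (4), each at most $k$, but this is \emph{not} automatically $\le k$ overall — so here I would note that we simply discard any leaf whose total span exceeds $k$ (or, more cleanly, observe that a feasible assignment only requires each individual choice to have span $\le k$ while the \emph{sum} being $\le k$ is checked at the end; in either case the enumeration is of a superset and we filter). Conversely, given any feasible assignment $X$, I would reconstruct the branch that produces it: take $X_w = X\cap\vof{w}$ in step~(1); step~(2) is forced and consistent with $X$ by Condition~2; in step~(3), for each $v$ with $X\cap\vof{v}\ne\emptyset$ not yet touched, Condition~4 says $X\cap\vof{v}$ intersects $I_S$, so pick such a $v_t$; for $v$ with $X\cap\vof{v}=\emptyset$ pick the "no vertex" option; finally step~(4) can select exactly the assignment $X_v = X\cap\vof{v}$. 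Hence the procedure enumerates precisely the feasible assignments (after the span filter).

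For the running time, I would bound the size of the branching tree. Step~(1) branches into at most $T^2$ choices (a non-empty interval of timestamps is determined by its two endpoints), and actually we can restrict to intervals of length at most $k+1$, giving at most $kT$ or simply $T^2$ branches — I will just use $T^2$. Step~(2) is deterministic and costs $O(|E(G[V_S])|) = O(n^2 T)$ per branch, or better, $O(nT)$ since we only scan edges $v_tw_t$. Step~(3) branches, for each of at most $|V_S|-1$ base vertices, into $|I_S|+1 \le 2k+1$ options, by Lemma~\ref{lem:easybound}; but $|V_S| \le k+1$ as well, since each base vertex in $V_S\setminus\{w\}$ contributes span $\ge 1$ to $S$ and $sp(S)\le k$. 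So step~(3) contributes at most $(2k+1)^{k}$ branches. Step~(4) branches, for each base vertex that is "active", into at most $k+1$ choices of the missing left endpoint times $k+1$ choices of the missing right endpoint of an interval of span $\le k$ containing the already-chosen vertices — so at most $(k+1)^2$ per base vertex, and again there are at most $k+1$ such base vertices (they all have span $\ge 1$ in $X$), giving at most $(k+1)^{2(k+1)}$ branches. Multiplying: the number of leaves is at most $T^2 \cdot (2k+1)^{k} \cdot (k+1)^{2(k+1)} = T^2 \cdot 2^{O(k\log k)}$, which is $T^2 \cdot 2^{4k\log k}$ for $k$ large enough (the constant $4$ absorbs $(2k+1)^k \le 2^{2k\log k}$ for $k\ge 2$ and $(k+1)^{2(k+1)} \le 2^{2k\log k + O(\log k)}$, leaving room). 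At each leaf the work to assemble $X$ and, if desired, to verify the conditions and the total span is polynomial — checking Condition~2 costs $O(n^2T)$ or $O(nT)$ as above, so I would charge $O(T^3 n)$ generously per leaf (an honest accounting gives less, but $T^3 n$ comfortably covers scanning all timestamps, all base vertices, and edge checks). Hence the total is $O(2^{4k\log k}\, T^3 n)$, as claimed.

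The main obstacle I anticipate is Condition~1, i.e. the total span bound of $k$: steps (1) and (4) each independently allow a full span of $k$, so a naive leaf can have total span up to $(k+1)\cdot k$. The clean fix is to treat the procedure as enumerating a superset of the feasible assignments and to filter at the end — but then the statement "enumerate every feasible assignment" must be read as "enumerate a set containing every feasible assignment, all of which can be tested in polynomial time", which is exactly what the downstream algorithm needs. Alternatively one can thread a running budget through the branching so that the sum of spans chosen so far never exceeds $k$, pruning as soon as it does; this keeps the leaves genuinely feasible and does not increase the asymptotic count. I would state the lemma/theorem with whichever convention the rest of Section~\ref{sec:FPT} uses and remark that the two are equivalent for our purposes. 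A secondary, more mechanical point to get right is the claim $|V_S| \le k+1$ and that each "active, touched" base vertex in step~(4) has span $\ge 1$ in $X$ — this is where the $2^{O(k\log k)}$ rather than $2^{O(T)}$ comes from, and it relies precisely on Lemma~\ref{lem:easybound} together with the span accounting, so I would spell that out carefully.
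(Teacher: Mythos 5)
Your proposal follows essentially the same route as the paper: show completeness by reconstructing, for any feasible assignment $X$, the branch that produces it (Step (1) yields $X\cap\vof{w}$, Step (2) is forced by Condition~2, Condition~4 guarantees a vertex in $I_S$ or one forced by an uncovered $v_tw_t$ edge, and Step (4) reaches $X\cap\vof{v}$), and then bound the branching by $T^2$ for Step (1), $(2k+1)^k$ for Step (3) via Lemma~\ref{lem:easybound}, and $k^{O(k)}$ for Step (4), with polynomial work per branch. Two caveats. First, your soundness claim that Condition~2 holds at every leaf is false: for an edge $u_tv_t$ with $u,v\in V_S\setminus\{w\}$, Step (3) may choose the empty option or a timestamp of $I_S$ different from $t$ for both endpoints, so leaves need not be feasible. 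This does not hurt you, because the theorem (and the paper's proof) only asserts that every feasible assignment appears among the enumerated sets; non-feasible leaves are harmless and, as your own filtering remark notes, feasibility (including the total-span Condition~1 you rightly flag) is checkable in polynomial time per leaf. Second, your final accounting is off: with $T^2\cdot 2^{O(k\log k)}$ leaves, charging $O(T^3 n)$ per leaf gives $O(2^{O(k\log k)}T^5 n)$, not the claimed bound; you need the honest per-branch cost $O(Tn)$ (as in the paper, which multiplies $T^2\cdot Tn\cdot(2k+1)^k k^{2k}$) to land on $O(2^{4k\log k}T^3 n)$.
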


\subsection*{Reducing to \DigraphCut{}}

Our objective is now to list every feasible assignment and, for each of them, to verify whether there is a temporal cover that agrees with it.
More specifically, consider a feasible assignment $X \subseteq \vof{V_S}$. 
Our goal is to decide whether there is a temporal cover $X^*$ of span at most $k$ that
agrees with $X$.  
Since we branch over every possible feasible 
assignment $X$, if there is a temporal cover $X^*$ of $G$ of span at most $k$, then by Theorem~\ref{teo:enumerateFA}  our enumeration will eventually consider an $X$ that $X^*$ agrees with, and hence we will be able to decide of the existence of $X^*$.  

We show that finding $X^*$ reduces to the \DigraphCut{} problem, as we define it below.
For a directed graph $H$, we denote its set of arcs by $A(H)$ (to avoid confusion with $E(G)$, which is used for the edges of an undirected graph $G$).  For $F \subseteq A(H)$, we write $H - F$ for the directed graph with vertex set $V(H)$ and arc set $A(H) \setminus F$.

    \begin{problem} $($\DigraphCut{}$)$ \\ 
\textbf{Input:} A directed graph 
$H = (V(H),A(H))$, a source
vertex $s \in V(H)$, a set
of vertex pairs $P \subseteq {{V(H)} \choose 2}$ called \emph{forbidden pairs}, a subset of arcs $D \subseteq A(H)$ called \emph{deletable arcs}, and an integer $k'$.\\
\textbf{Output:} Does there exist a set of arcs 
$F \subseteq D$
of $H$ such that $|F| \leq k'$ and such that, for each $\{u,v\} \in P$,
at least one of $u$, $v$ is not reachable from $s$ in $H - F$?
\end{problem}

It is known that 
$\DigraphCut{}$ 
can be solved in time $O^*(2^{k'})$~\cite{DBLP:journals/jacm/KratschW20}, but a few remarks are needed before proceeding.  In~\cite{DBLP:journals/jacm/KratschW20}, the authors only provide an algorithm for the \emph{vertex-deletion} variant, and do not consider deletable/undeletable arcs.  It is easy to make an arc undeletable by adding enough parallel paths between the two endpoints, and we show at the end of the section that our formulation of $\DigraphCut{}$ reduces to the simple vertex-deletion variant.
The vertex-deletion variant also admits a randomized polynomial kernel, and other FPT results are known for weighted arc-deletion  variants~\cite{DBLP:conf/stoc/0002KPW22}.

So let us fix a feasible assignment $X$ for the remainder of the section.  We will denote $M_S = M_S(X)$ and $N_S = N_S(X)$.  
We also consider the following set of vertices
associated with $N_S$:
\begin{align*}
N'_S = \{v_2 : v \in N_S\} \quad \quad
\quad
&
N''_S =  \{v_t \in \vof{N_S} : t \in I_S\}.
\end{align*}
% \ml{ML: I defined $N'_S = \{v_2 : v \in N_S\}$.  I looked at the proof and I saw no problem with that, even if $v_2 \in N_S''$.}
% \rd{I just wanted to put $N'_S$ outside of $I_S$, but it's not a problem and like this it's easier to state.}
For each base vertex $v \in N_S$,
we need $N'_S$ to contain any vertex of $\vof{v}$ in time $[2, T - 1]$, so we choose $v_2$ arbitrarily.  
% The idea is that $N'_S$ extends $Z'_S$ to $N_S$.  
Then,  
$N''_S$ contains those vertices
$v_t$, with $t \in I_S$,
not chosen by the feasible assignment $X$.
Note that according to our definition of agreement, a solution $X^*$ should contain all the neighbors of $N''_S$ vertices that are in $Z_S$.
Recall that we have defined $Z_S = V_B \setminus V_S$ and 
$Z'_S = S \setminus V'_S$.  
By Lemma~\ref{lem:easypart}
we know that $Z'_S$ covers each temporal edge of $G[V_B \setminus \{w\}]$ not covered by $S \cap V'_S$, and that $sp(Z'_S) = 0$.
%Let $Z \subseteq V(G) \setminus V'_S$ be a temporal cover of $G - V'_S$ of cost $0$.  
We may assume that for each $v \in Z_S$, there is exactly one $t \in [T]$ such that $v_t \in Z'_S$ (there cannot be more than one since $Z'_S$ has span $0$, and if there is no such $t$, we can add any $v_t$ without affecting the span).
Furthermore, we will assume that for each $v \in Z_S$, the vertex $v_t$ in $Z'_S$ is not $v_1$ nor $v_T$. 
 Indeed, since we assume that the first and last timestamps of $G$ have no edges, if $v_t = v_1$ or $v_t = v_T$, then $v_t$ covers no edge and we may safely change $v_p$ to another vertex of $\vof{v}$.

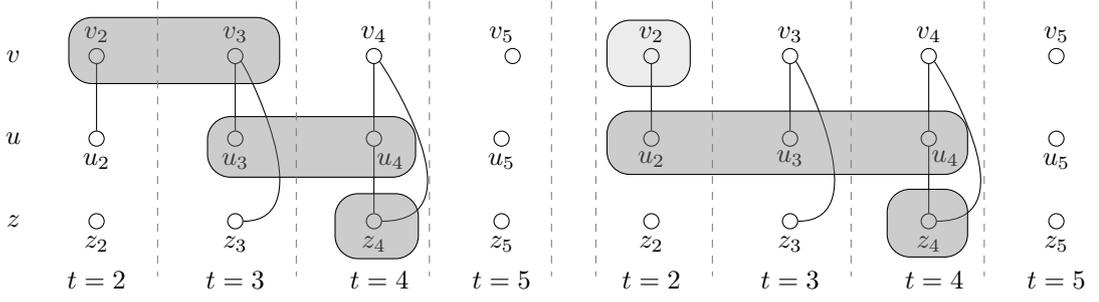
\begin{figure}

%\centering

  \begin{tikzpicture}[scale=0.73]
    \node at (0,0) {$v$};
    \node at (0,-1.5) {$u$};
    \node at (0,-3) {$z$};

%%%%%%%%%%%%%%%%%%%%%%%%%%%%%%%%%%%%%%%%%

    \node[main node] (1) at (1.5,0){};
    \node at (1.5,0.4) {$v_2$};

    \node[main node] (2) at (1.5,-1.5) { };
    \node at (1.5,-1.9) {$u_2$};
    
    \node[main node] (3) at (1.5,-3){};
    \node at (1.5,-3.42) {$z_2$};

    \draw (1.south) -- (2.north);
    % \draw (1.south east) .. controls +(down:-1mm) and +(right:15mm) .. (3.east);

%%%%%%%%%%%%%%%%%%%%%%%%%%%%%%%%%%%%%

    \node[main node] (5) at (4,0){};
    \node at (4,0.4) {$v_3$};
    
    \node[main node] (6) at (4,-1.5) { };
    \node at (4,-1.9) {$u_3$};

    \node[main node] (7) at (4,-3){};
    \node at (4,-3.42) {$z_3$};

    \draw (5.south) -- (6.north);

    \draw (5.south east) .. controls +(down:-1mm) and +(right:15mm) .. (7.east);

%%%%%%%%%%%%%%%%%%%%%%%%%%%%%%%%%%%%%%%%% 

    \node[main node] (9) at (6.5,0){};
    \node at (6.5,0.4) { \large{$v_4$}};        
    \node[main node] (10) at (6.5,-1.5) { };
    \node at (6.8,-1.9) {$u_4$};
    
    \node[main node] (11) at (6.5,-3){};
    \node at (6.5,-3.42) {$z_4$};

    \draw (9.south) -- (10.north);
    
    \draw (10.south) -- (11.north);

    \draw (9.south east) .. controls +(down:-0.3mm) and +(right:+18.7mm) ..  (11.east);    
    % \draw (9.south east) .. controls +(down:-1mm) and +(right:15mm) .. (12.east);

 %   \draw (11.south) -- (12.north);

%%%%%%%%%%%%%%%%%%%%%%%%%%%%%%%%%%%%%%%%%%%%%

    \node[main node] (13) at (9,0){};
    \node at (8.8,0.4) {$v_5$};        
    \node[main node] (14) at (8.8,-1.5) { };
    \node at (8.8,-1.9) {$u_5$};
    
    \node[main node] (15) at (8.8,-3){};
    \node at (8.8,-3.42) {$z_5$};

\draw[fill=gray, fill opacity=0.4, rounded corners=2ex] (1,-0.5) rectangle (4.8,0.7);

\draw[fill=gray, fill opacity=0.4, rounded corners=2ex] (3.5,-1.1) rectangle (7.25,-2.2);

\draw[fill=gray, fill opacity=0.4, rounded corners=2ex] (5.8,-2.5) rectangle (7.3,-3.68);

\draw[dashed,thin,gray] (2.6,1)--(2.6,-4);
\draw[dashed,thin,gray] (5.1,1)--(5.1,-4);
\draw[dashed,thin,gray] (7.5,1)--(7.5,-4);

\node at (1.5,-4.07) {$t=2$};
\node at (4,-4.07) {$t=3$};
\node at (6.6,-4.07){$t=4$};
\node at (8.8,-4.07) {$t=5$};

\draw[dashed,thin,gray] (9.7,1)--(9.7,-4);
\draw[dashed,thin,gray] (10.5,1)--(10.5,-4);

%\end{tikzpicture}

%%%%%%%%%%%%%%%%%%%%%%%%%%%%%%%%%%%%%%%%%%
%%%%%%%%%%%%%%%%%%%%%%%%%%%%%%%%%%%%%%%%%%
%%%%%%%%%%%%%%%%%%%%%%%%%%%%%%%%%%%%%%%%%%

%  \begin{tikzpicture}[scale=0.93]

    % \node at (1,0) { \large{$v$}};
    % \node at (0,-1.5) {\large{ $u$}};
    % \node at (0,-3) {\large{ $z$}};

%%%%%%%%%%%%%%%%%%%%%%%%%%%%%%%%%%%%%%%%%

    \node[main node] (b1) at (11.5,0){};
    \node at (11.5,0.4) {$v_2$};

    \node[main node] (b2) at (11.5,-1.5) { };
    \node at (11.5,-1.87) {$u_2$};
    
    \node[main node] (b3) at (11.5,-3){};
    \node at (11.5,-3.42) {$z_2$};

    \draw (b1.south) -- (b2.north);
    % \draw (1.south east) .. controls +(down:-1mm) and +(right:15mm) .. (3.east);

%%%%%%%%%%%%%%%%%%%%%%%%%%%%%%%%%%%%%

    \node[main node] (b5) at (14,0){};
    \node at (14,0.4) {$v_3$};
    
    \node[main node] (b6) at (14,-1.5) { };
    \node at (14,-1.86) {$u_3$};

    \node[main node] (b7) at (14,-3){};
    \node at (14,-3.42) {$z_3$};

    \draw (b5.south) -- (b6.north);

    \draw (b5.south east) .. controls +(down:-1mm) and +(right:15mm) .. (b7.east);

%%%%%%%%%%%%%%%%%%%%%%%%%%%%%%%%%%%%%%%%% 

    \node[main node] (b9) at (16.5,0){};
    \node at (16.5,0.4) {$v_4$};        
    \node[main node] (b10) at (16.5,-1.5) { };
    \node at (16.8,-1.86) {$u_4$};
    
    \node[main node] (b11) at (16.5,-3){};
    \node at (16.5,-3.42) {$z_4$};

    \draw (b9.south) -- (b10.north);
    
    \draw (b10.south) -- (b11.north);

    \draw (b9.south east) .. controls +(down:-0.3mm) and +(right:+18mm) ..  (b11.east);    
    % \draw (9.south east) .. controls +(down:-1mm) and +(right:15mm) .. (12.east);

 %   \draw (11.south) -- (12.north);

%%%%%%%%%%%%%%%%%%%%%%%%%%%%%%%%%%%%%%%%%%%%%

    \node[main node] (b13) at (18.8,0){};
    \node at (18.8,0.4) {$v_5$};        
    \node[main node] (b14) at (18.8,-1.5) { };
    \node at (18.8,-1.9) {$u_5$};
    
    \node[main node] (b15) at (18.8,-3){};
    \node at (18.8,-3.42) {$z_5$};

\draw[fill=gray, fill opacity=0.15, rounded corners=2ex] (10.7,-0.55) rectangle (12.2,0.65);

\draw[fill=gray, fill opacity=0.4, rounded corners=2ex] (10.7,-1) rectangle (17.2,-2.15);

\draw[fill=gray, fill opacity=0.4, rounded corners=2ex] (15.75,-2.42) rectangle (17.2,-3.65);

\draw[dashed,thin,gray] (12.6,1)--(12.6,-4);
\draw[dashed,thin,gray] (15.1,1)--(15.1,-4);
\draw[dashed,thin,gray] (17.5,1)--(17.5,-4);

\node at (11.5,-4.07) {$t=2$};
\node at (14,-4.07) {$t=3$};
\node at (16.6,-4.07) {$t=4$};
\node at (18.8,-4.07) {$t=5$};

\end{tikzpicture}

\caption{An example of application of iterative compression
(timestamps $1$ and $6$ are not shown as they are edgeless). In the left part, we represent solution $S = \{ v_2, v_3, u_3, u_4, z_4\}$, where the 
vertices in $S$ are highlighted with grey rectangles.
Note that $I_S = \{ 2,3,4 \}$,
$V_S = \{ v,u\}$, 
$V'_S = \{ v_2, v_3, u_3, u_4\}$,
$Z_S =\{ z\}$, $Z'_S = \{ z_4 \}$.
In the right part, we represent in grey
a feasible assignment $X$ associated
with $S$, $X = \{ u_2, u_3, u_4\}$;
in light grey we highlight 
$N'_S = \{ v_2\}$.
% , where each vertex in the assignment
% is highlighted in grey;
%$X$ is defined as 
The sets associated with $S$ and $X$ are:
$M_S = \{ u\}$, $N_S = \{ v \}$, $N'_S = \{ v_2\}$,
$N''_S =\{ v_2, v_3, v_4 \}$.
% In light grey we highlight the 
% sets computed by the iterative compression step.
The reduction
to \DigraphCut{} eventually leads 
% By applying the reduction to
% \DigraphCut{},
%we obtain 
to the solution
of \TimeCover{} represented in Fig.~\ref{fig:ex}.  
% \ml{ML: I got confused a bit with the fig on the right at first, I thought $v_2$ was in $X$.  Is it possible to put it another color, or just make it have another appearance?  Last small remark, on the left, could $z_2$ and $z_3$ labels be below the vertex (like $z_4$?).  Same with $z_2$ on the right.}
}
\label{fig:ItCompr1}
\end{figure}

The following observation will be useful for our reduction to \DigraphCut{}.

\begin{observation}\label{obs:not-disallowed}
Let $u_t v_t \in E(G)$ such that $u \in N_S$ and $v \notin M_S$.  Then $v \in Z_S$ and, if $u_t \notin N''_S$, we have $v_t \in Z'_S$.
\end{observation}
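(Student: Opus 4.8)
Observation~\ref{obs:not-disallowed} is a short unpacking of definitions, so the plan is to argue directly from the structure of $V_S$, $N_S$, $M_S$, and the temporal cover $S$.

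\textbf{Proof plan.} Let $u_t v_t \in E(G)$ with $u \in N_S$ and $v \notin M_S$. First I would establish $v \in Z_S$. Since $M_S$ and $N_S$ partition $V_S$ (as noted right after Definition~\ref{def:feasibleAssignment}), from $v \notin M_S$ we get either $v \in N_S$ or $v \notin V_S$. I claim $v \notin N_S$: indeed, $u \in N_S \subseteq V_S$ and $v \in V_S$ would make $u_t v_t$ an edge of $G[V_S]$, hence it must be covered by the feasible assignment $X$ (Condition~2 of Definition~\ref{def:feasibleAssignment}); but $u \in N_S(X)$ means $X \cap \vof{u} = \emptyset$ and $v \in N_S(X)$ means $X \cap \vof{v} = \emptyset$, so neither $u_t$ nor $v_t$ lies in $X$, contradicting that $X$ covers $u_t v_t$. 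Therefore $v \notin V_S$, i.e. $v \in V_B \setminus V_S = Z_S$.

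\textbf{The second part.} Now assume additionally $u_t \notin N''_S$; I must show $v_t \in Z'_S$. Since $u \in N_S$ but $u_t \notin N''_S = \{v_t \in \vof{N_S} : t \in I_S\}$, we conclude $t \notin I_S$. Next observe that the edge $u_t v_t$ is an edge of $G - \{w\}$: we have $v \in Z_S$, so $v \neq w$ (as $w \in V_S$); and $u \in N_S \subseteq V_S$ with $u \neq w$ (because Condition~3 of Definition~\ref{def:feasibleAssignment} forces $X \cap \vof{w}$ nonempty, so $w \in M_S$, hence $w \notin N_S$). Thus $u_t v_t \in E(G - \{w\})$. Since $S$ is a temporal cover of $G - \{w\}$, it covers $u_t v_t$, so $u_t \in S$ or $v_t \in S$. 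But $u \in V_S \setminus \{w\}$, so by the definition of $I_S$ any vertex $u_{t'} \in S$ with $u \in V_S \setminus \{w\}$ has $t' \in I_S$; since $t \notin I_S$, we get $u_t \notin S$. Hence $v_t \in S$. Finally, $v \in Z_S$ means $v \notin V_S \setminus \{w\}$ and $v \neq w$, so $v_t \notin V'_S$ (recall $V'_S$ consists only of vertices of $V_S \setminus \{w\}$ that are in $S$, together with all $w_t$). Therefore $v_t \in S \setminus V'_S = Z'_S$, as required.

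\textbf{Main obstacle.} There is no real obstacle here; the statement is purely bookkeeping with the many sets introduced before it. The one point requiring a little care is making sure the edge $u_t v_t$ actually belongs to $G - \{w\}$ rather than only to $G$, since Lemma~\ref{lem:easypart} and the covering properties of $S$ are only available for $G - \{w\}$; this is handled by noting that both $u$ and $v$ differ from $w$. The rest is a matter of invoking the definitions of $I_S$ (which pins down that $S$ touches $\vof{u}$ only inside $I_S$) and of $V'_S$ and $Z'_S$ in the right order.
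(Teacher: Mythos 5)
Your proof is correct and follows essentially the same route as the paper's: rule out $v \in N_S$ via Condition~2 of the feasible-assignment definition (so $v \in Z_S$), then use $u_t \notin N''_S \Rightarrow t \notin I_S \Rightarrow u_t \notin S$ and the covering property of $S$ to conclude $v_t \in S \setminus V'_S = Z'_S$. Your extra bookkeeping (checking $u, v \neq w$ so the edge lies in $G - \{w\}$) is a detail the paper leaves implicit, but it does not change the argument.
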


% \begin{proof}
%     There cannot be an edge $u_p v_p$ between two vertices of $\vof{N_S}$, since $X$ must cover every edge in $G[V_S]$ and contains no vertices of $\vof{N_S}$.  Thus $v \notin N_S$, and since $v \notin M_S$, we have $v \in Z_S$.  
%     Next suppose that $u_p \notin N''_S$.  Then $p \notin I_S$, implying that $u_p \notin S$.  Hence, the edge must be covered by $v_p$, which is in $S \setminus V'_S = Z'_S$.
% \end{proof}

Now, given a feasible assignment $X \subseteq V'_S$, sets $M_S$, $N_S$, $N'_S$, $N''_S$, $Z_S$, and
$Z'_S$,
we present our reduction to the \DigraphCut{} problem.
We construct an instance
of this problem that consists
of the directed graph $H = (V(H), A(H))$, the set of forbidden (unordered) pairs $P \subseteq {{V(H)} \choose 2}$, and the deletable arcs $D \subseteq A(H)$ by applying the following steps. 
The second step in the construction is the most important and is shown in 
Figure~\ref{fig:GeneralTempogadgetsb}.
The intuition of these steps is provided afterwards.

\begin{enumerate}
    \setlength{\itemsep}{0.75em}
   \item \label{step:add-source}
    add to $H$ the source vertex $s$;

    \item \label{step:gadget} 
    for each $v \in Z_S \cup N_S$, 
    let $v_i$ be the vertex of $Z'_S \cup N'_S$, where $i \in [2,T-1]$.
%    $2 \leq i \leq T - 1$.
    Add to $H$ the vertices $\vtpos{1}, \ldots, \vtpos{i-1},
    \vtneg{i}, \vtpos{i+1}, \ldots, \vtpos{T}$, the vertices
    $b_{v,j}, c_{v,j}, d_{v,j}$, 
    for $j \in [T] \setminus \{i\}$, 
    and the set of arcs shown in Figure~\ref{fig:GeneralTempogadgetsb},
    that is there are arcs $(\vtpos{j},b_{v,j})$, $(\vtpos{j},c_{v,j})$,
    $(c_{v,j}, d_{v,j})$, $(d_{v,j}, \vtneg{j})$,
    for each $j \in [T] \setminus \{i\}$ 
    and four directed paths: (1) from $b_{v,i-1}$ to $b_{v,1}$, (2) from $c_{v,1}$ to $c_{v,i-1}$,
    (3) from $b_{v,i+1}$ to $b_{v,T}$ and
    (4) from $c_{v,T}$ to $c_{v,i+1}$.

    Add to $D$ the set of deletable arcs
    $(c_{v, j}, d_{v,j})$,
    for $j \in [T] \setminus \{i\}$.

    Then add the following pairs to $P$: 
    \begin{enumerate}
        \item $\{d_{v,h}, b_{v,j}\}$, with 
                $1 \leq h < j \leq i-1$; 

        \item     $\{d_{v,h}, b_{v,j}\}$, with $i+1 \leq j < h \leq T$;

        \item  $\{c_{v,h}, d_{v,j}\}$, with $1 \leq h \leq i-1 \leq i+1 \leq j \leq T$;

        \item  $\{c_{v,h}, d_{v,j}\}$, with 
    $1 \leq j \leq i-1 \leq i+1 \leq h \leq T$.           
    \end{enumerate}
    Note that we have created $T + 3(T - 1) = 4 T-3$ vertices in $H$ in this step.  The subgraph of $H$ induced by these vertices will be called the \emph{gadget corresponding to $v$}.
    
    \item \label{step:edges}
    for each temporal edge $u_t v_t \in E(G)$ such that $u_t, v_t \in \vof{Z_S} \cup (\vof{N_S} \setminus N''_S)$, there are three cases.  
    First note that at least one of $u_t$ or $v_t$ is in $Z'_S$.  Indeed, if $u, v \in Z_S$, this is because an element of $Z'_S$ must cover the temporal edge, and if $u \in N_S$, then $v_t \in Z'_S$ by Observation~\ref{obs:not-disallowed} (or if $v \in N_S, u_t \in Z'_S$).
    The subcases are then:
    \begin{enumerate}
        \item 
        if $u_t, v_t \in Z'_S \cup N'_S$, add the pair $\{ \utneg{t}, \vtneg{t}\}$ to $P$;
        
        \item 
        if $u_t \in Z'_S \cup N'_S, v_t \notin Z'_S \cup N'_S$, add the arc $(\utneg{t}, \vtpos{t})$ to $H$;
        
        \item 
        if $v_t \in Z'_S \cup N'_S, u_t \notin Z'_S \cup N'_S$, add the arc $(\vtneg{t}, \utpos{t})$ to $H$;
        
    \end{enumerate}

    \item \label{step:s-edges}
    for each temporal edge $u_t v_t \in E(G)$ such that $u_t \in (\vof{M_S} \setminus X) \cup N''_S$ and $v_t \in \vof{Z_S}$, there are two cases:
    \begin{enumerate}
        \item 
        if $v_t \notin Z'_S$, add the arc $(s, \vtpos{t})$ to $H$;
        
        \item 
        if $v_t \in Z'_S$, add the pair $\{s, \vtneg{t}\}$ to $P$.
        
    \end{enumerate}

\end{enumerate}

Define $k' = k - sp(X)$.  This concludes the construction.  We will refer to the elements 1, 2, 3, 4 of the above enumeration as the \emph{Steps} of the construction.
Note that the only deletable arcs in $D$ are the arcs $(c_{v,j}, d_{v,j})$ introduced in Step 2.

\begin{figure}

%\centering
\includegraphics[scale=0.88]{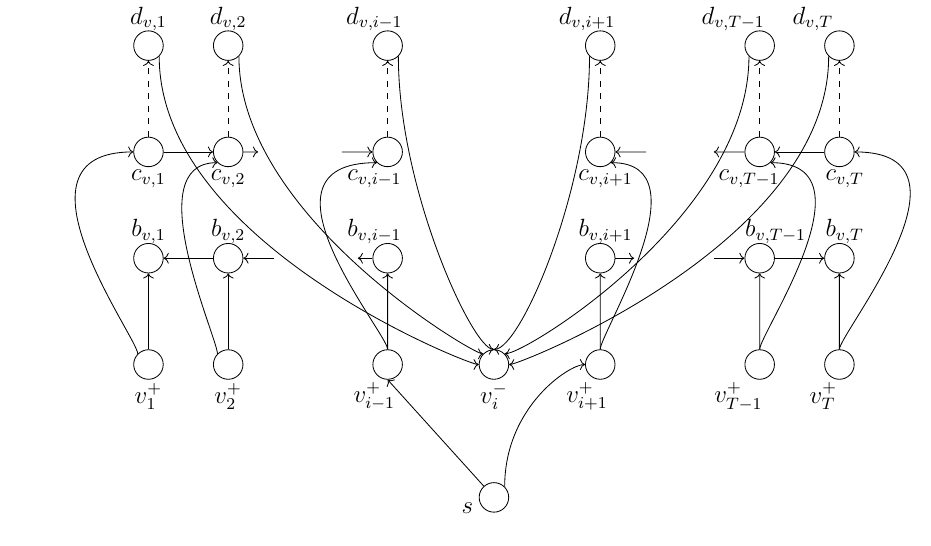}

\caption{Gadget for $v_i \in Z'_S \cup N'_S$,
where $i \in [2,T-1]$. 
\rd{
We assume that there
exist temporal edges $u_t v_t \in E(G)$, where $t \in \{i-1,i+1\}$, such that $u_t \in (\vof{M_S} \setminus X) \cup N''_S$, $v_t \in \vof{Z_S}$ and
$v_t \notin Z'_S$, thus
arcs from $s$ to $\vtpos{t}$
are added.}
The dashed arcs represent deletable arcs.
% \ml{ML: figure 
%  looks good :) do you know how to move the figure a bit to the left?}
%  \rd{It's mistery, I'm not able to do it :)
%  I will try adding the figure as an external file}
}
\label{fig:GeneralTempogadgetsb}
\end{figure}

From here, the interpretation of $H$ is that if we delete arc set $F$, then

\begin{enumerate}

\item[(p1)] For $v_t \notin Z'_S \cup N'_S$ we should 
include $v_t$ in $X^*$ if and only if $s$ reaches $\vtpos{t}$ in $H - F$;

\item[(p2)] For $v_t \in Z'_S \cup N'_S$ we should include $v_t$ in $X^*$ if and only if $s$ does \emph{not} reach $\vtneg{t}$ in $H - F$.
 
\end{enumerate}

The idea behind the steps of the construction is then as follows (and is somewhat easier to describe in the reverse order of steps). 
Step 4 describes an initial set of vertices that $s$ is forced to reach, which correspond to vertices that are forced in $X^*$.  A vertex $v_t$ in $\vof{Z_S}$ is forced in $X^*$ if it is in an edge $u_t v_t$ and $u_t \in \vof{M_S}$ but $u_t \notin X$.  By our definition of agreement, $v_t$ is also forced if $u_t \in N_S''$.  Step 4 handles both situations:
if $v_t \notin Z'_S$, we force $s$ to reach $\vtpos{t}$ with the arc $(s, \vtpos{t})$, which is not deletable.  If $v_t \in Z'_S$, then $\vtneg{t} \in V(H)$, and $s$ is forced to \emph{not} reach $\vtneg{t}$ by adding $\{s, \vtneg{t}\}$ to $P$.   By (p1) and (p2), both cases correspond to including $v_t$ in $X^*$.  Then, Step 3 ensures that each temporal edge is ``covered'': for a temporal edge $u_t v_t$, a pair of the form $\{\utneg{t}, \vtneg{t}\}$ in $P$ requires that $s$ does not reach one of the two, i.e. that we include one in $X^*$, and an undeletable arc of the form $(\utneg{t}, \vtpos{t})$ enforces that if $s$ reaches $\utneg{t}$ (i.e. $u_t \notin X^*$), then $s$ reaches $\vtpos{t}$ (i.e. $v_t \in X^*$).  The reason why $Z'_S$ is needed in our construction is that each edge has at least one negative corresponding vertex, so that no other case needs to be considered in Step 3.

Finally, Step~\ref{step:gadget} enforces the number of deleted arcs to correspond to the span of a solution.  That is, it ensures that if we want to add to $X^*$ a set of
$h$ vertices of base vertex $v \in Z_S$ to our solution of 
\ResTimeCover{} (so with a span
equal to $h-1$),
then we have to delete $h-1$ 
deletable arcs of the corresponding gadget of $H$
in order to obtain a solution
to \DigraphCut{} (and vice-versa).
Indeed, consider the gadget in Fig.~\ref{fig:GeneralTempogadgetsb}.
If $v_i$ is not included in $X^*$,
then in the gadget $s$ reaches $h$ positive
vertices $v_l^+, \dots , v_{r}^+$
(and $\vtneg{i}$).
\rd{
It follows that vertices 
$b_{v,l}, \dots, b_{v,r}$,
$c_{v,l}, \dots, c_{v,r}$
and
$d_{v,l}, \dots, d_{v,r}$
are all reachable from $s$.
The pairs $\{ d_{v,x}, b_{v,y} \}$
defined  at Step 2, where 
either $l \leq x \leq y \leq r-1$
if $r < i$,
or $l+1 \leq x \leq y \leq r$ if $l > i$, 
ensures that
arcs
$(c_{v,j}, d_{v,j})$, with $j \in [l,r-1]$
in the former case
or with $j \in [l+1,r]$ 
in the latter case,
% depending
% on the fact that $r < i$ or $l> i$,
are deleted.
% Thus arcs
% $(c_{v,j}, d_{v,j})$, with $j \in [l,r-1]$
% or with $j \in [l+1,r]$ depending
% on the fact that $r < i$ or $l> i$,
% must be deleted.
}

If $v_i$ is included in $X^*$,
then in the gadget $s$ reaches $h-1$ 
positive vertices $v_l^+, \dots, v_{r}^+$,
with $i \in [l , r]$,
and must not reach negative
vertex $\vtneg{i}$.
It follows that 
vertices 
$b_{v,l}, \dots, b_{v,r}$,
$c_{v,l}, \dots, c_{v,r}$
and
$d_{v,l}, \dots, d_{v,r}$
are all reachable from $s$.
Then
$h-1$ arcs
$(c_{v,j}, d_{v,j})$, with $j \in [l,r] \setminus \{i\},$
must be deleted,
due to the pairs $\{ d_{v,x}, b_{v,y} \}$, $\{ c_{v,x}, d_{v,y} \}$
defined  at Step 2.

Note that Step 2 is the reason we added dummy timestamps $1$ and $T$.  If $v_1$ or $v_T$ were allowed to be in $Z'_S \cup N'_S$, we would need a different gadget for these cases as they behave a bit differently, along with more cases in the proofs.  Adding the edgeless timestamps lets us bypass these cases.
%
% Let us note that another reason why $Z'_S$ is needed is that when there is no clear choice of whether to include a $v_t$ in $X^*$ or not, we default to the choice made in $Z'_S$.
%
We now proceed with the details.

\begin{lemma}
\label{lem:RestrictedSol}
There exists a solution of \ResTimeCover{}
that agrees with $X$
if and only if there is $F \subseteq D$ with $|F| \leq k'$ such that $s$ does not reach a forbidden pair in $H - F$.
Moreover, given such a set $F$, a solution of \ResTimeCover{}
can be computed in polynomial time.
\end{lemma}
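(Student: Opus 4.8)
The plan is to prove both directions by exhibiting an explicit correspondence between temporal covers $X^*$ of $G$ that agree with $X$ and feasible cut sets $F \subseteq D$, using the dictionary (p1)--(p2) that translates reachability from $s$ in $H - F$ into membership in $X^*$. For the forward direction, I would start with a temporal cover $X^*$ of span at most $k$ that agrees with $X$, and define $F$ as follows: for each $v \in Z_S \cup N_S$ with associated vertex $v_i \in Z'_S \cup N'_S$, look at the set $\vof{v} \cap X^*$ (a contiguous interval of timestamps, or, for $v \in N_S$, possibly empty or containing vertices forced by Step~4); put into $F$ exactly the arcs $(c_{v,j}, d_{v,j})$ for those $j$ that need to be "activated" so that $\vtneg{j}$ becomes unreachable precisely when $v_j \in X^*$. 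Concretely, if $X^* \cap \vof{v}$ is the interval $[l,r]$ (treating $v_i$ specially according to whether $i \in [l,r]$), then by the gadget analysis sketched before the lemma, deleting the arcs $(c_{v,j},d_{v,j})$ for $j$ in $[l,r]$ minus $\{i\}$ (plus a boundary adjustment) is exactly what is required, and this uses $\mathrm{sp}(v,X^*) - \mathrm{sp}(v,X)$ deletions for $v \in Z_S$ (since $\mathrm{sp}(v,X) = 0$) and a similar count for $v \in N_S$. Summing, $|F| = \mathrm{sp}(X^*) - \mathrm{sp}(X) \leq k - \mathrm{sp}(X) = k'$. I then verify that no forbidden pair is reached: pairs from Step~2 are handled by the gadget reasoning (an undeleted $(c_{v,j},d_{v,j})$ makes $d_{v,j}$ unreachable when the corresponding $b$'s are reachable, and vice versa); pairs $\{\utneg{t},\vtneg{t}\}$ from Step~3(a) are not both reached because $X^*$ covers $u_tv_t$, so one of $u_t, v_t$ is in $X^*$, hence one negative vertex is unreachable by (p2); arcs from Steps~3(b), 3(c), 4(a) propagate "forced in $X^*$" correctly; and pairs $\{s,\vtneg{t}\}$ from Step~4(b) are safe because agreement forces $v_t \in X^*$.

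For the reverse direction, given $F \subseteq D$ with $|F| \leq k'$ such that $s$ reaches no forbidden pair in $H - F$, I would define $X^*$ by reading off reachability: include $v_t$ for $v \in Z_S \cup N_S$ according to (p1)/(p2); for $v \in M_S$, set $X^* \cap \vof{v} = X \cap \vof{v}$; and include the single vertex $v_t \in Z'_S$ for any $v \in Z_S$ for which the gadget left nothing selected (to guarantee Condition~1 of a temporal cover). The key checks are: (i) $X^* \cap \vof{v}$ is a contiguous interval for each $v$ — this is exactly what the chains of $b$- and $c$-paths together with the Step~2 forbidden pairs enforce, since a "gap" in reachability would create a reachable forbidden pair $\{d_{v,h}, b_{v,j}\}$ or $\{c_{v,h}, d_{v,j}\}$; (ii) every temporal edge is covered — for edges inside $\vof{Z_S} \cup (\vof{N_S}\setminus N''_S)$ this follows from Step~3 (a Step~3(a) pair forces one endpoint out of reach, i.e. into $X^*$; a Step~3(b)/(c) arc forces the implication $u_t \notin X^* \Rightarrow v_t \in X^*$), using Observation~\ref{obs:not-disallowed} to know that at least one endpoint is a negative vertex so Step~3 indeed fired; for edges with an endpoint in $\vof{M_S}$ or $N''_S$, coverage is handled either by the fact that $X \cap \vof{M_S} = X^* \cap \vof{M_S}$ covers it, or by Step~4 forcing the $Z_S$-endpoint in; and for edges inside $G[V_S]$, feasibility Condition~2 of $X$ already covers them and $X^*$ agrees with $X$ on $M_S$ while $N_S$-vertices' incident $V_S$-edges are covered by the $M_S$ side; (iii) $\mathrm{sp}(X^*) \leq k$ — the span contributed by $M_S \cup \{w\}$ is $\mathrm{sp}(X)$, the span contributed by each $v \in Z_S \cup N_S$ equals the number of deleted arcs in its gadget (again by the gadget analysis), so $\mathrm{sp}(X^*) = \mathrm{sp}(X) + |F| \leq \mathrm{sp}(X) + k' = k$; and (iv) $X^*$ agrees with $X$ — immediate for $M_S$ by construction, and for $v \in N_S$, $t \in I_S$, every neighbor $u_t \in \vof{Z_S}$ of $v_t$ is forced into $X^*$ by Step~4 (since $v_t \in N''_S$), as required. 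The "moreover" part is then clear: the above reading-off of $X^*$ from $F$ is a polynomial-time procedure.

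The main obstacle I anticipate is the careful bookkeeping in the gadget of Step~2, specifically proving the exact equivalence "$s$ reaches $h$ (resp. $h-1$) positive vertices forming an interval, avoiding $\vtneg{i}$ where appropriate $\iff$ exactly $h-1$ of the arcs $(c_{v,j},d_{v,j})$ are in $F$." This requires a case split on whether the selected interval lies entirely to the left of $i$, entirely to the right, or straddles $i$ (the straddling case being where $v_i \in X^*$ and we must ensure $\vtneg{i}$ is not reached), and in each case one must trace through which $b_{v,j}$ and $d_{v,j}$ vertices become reachable and confirm that the forbidden pairs 2(a)--2(d) force precisely the right deletions and no more. I would handle this by first proving a self-contained claim about the gadget in isolation — "in $H-F$ restricted to the gadget for $v$, the set of positive vertices reachable from the gadget's sources is an interval whose size is one more than $|F \cap (\text{gadget arcs})|$, and $\vtneg{i}$ is reachable iff that interval does not contain $i$" — and then applying it uniformly. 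The second delicate point is making sure the boundary conditions at $i-1$ and $i+1$ (the endpoints of the two $b$/$c$ chains adjacent to the special vertex $v_i$) behave correctly, which is exactly the reason the dummy timestamps $1$ and $T$ were introduced so that $i \in [2, T-1]$ always holds and both chains are non-degenerate.
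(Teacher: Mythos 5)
Your plan is correct and follows essentially the same route as the paper's proof: in the forward direction you delete the gadget arcs matching the interval $X^*\cap\vof{v}$ (with the same three-way case split around the special index $i$) and verify the forbidden pairs via the reachability--membership dictionary, and in the reverse direction you read $X^*$ off reachability, check agreement, coverage (with the same edge case analysis using Observation~\ref{obs:not-disallowed}) and the span accounting against $k'=k-sp(X)$. The only refinement the paper adds is that the dictionary (p1)--(p2) is not assumed but proved as a claim by induction on the distance from $s$ in $H-F$ (your ``arcs propagate forced-in-$X^*$ correctly'' remark is exactly this induction), and the gadget bookkeeping is done per reachable interval rather than via a gadget-in-isolation statement, since which positive vertices are reachable is determined by arcs entering the gadget, not by the deletions inside it.
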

\begin{proof}[Sketch of the proof]
($\Rightarrow$) Suppose that there exists
a solution $X^*$ of \ResTimeCover{}
that agrees with $X$.
By definition of \ResTimeCover{}, $X^*$ has span at most $k$. 
Note that for $v \in M_S$, the agreement requires that $X^* \cap \vof{v} = X \cap \vof{v}$, and so the span of $v$ in $X^*$ is the same as the span of $v$ in $X$.  Thus 
\[
\sum_{v \in Z_S \cup N_S} sp(v, X^*) \leq k - sp(X) = k'.
\]
We may assume that for every $v \in V_B$, at least one of $v_2, \dots, v_{T-1}$ is in $X^*$, as otherwise we add one arbitrarily without affecting the span (if only $v_1$ or $v_T$ is in $X^*$, remove it first).
For each $v \in Z_S \cup N_S$, consider the gadget corresponding to $v$ in $H$ and delete some of its dashed arcs as follows (we recommend referring to 
Figure~\ref{fig:GeneralTempogadgetsb}).

First, if only one of $\vof{v}$ is in $X^*$, no action is required on the gadget.
So assume that $X^* \cap \vof{v}$ has at least two vertices; in the following
we denote
$v_l = v_{\delta(v,X^*)} $ and
$v_r = v_{\Delta(v,X^*)}$ the vertices
associated with $v$
having minimum and maximum timestamp, respectively, contained in $X^*$.  
We assume that $l,r \in [2, T-1]$ and $l < r$.
%$2 \leq l < r \leq T-1$.
Note that $X^* \cap \vof{v} = \{ v_l, v_{l+1}, \dots, v_r\}$.

Let $v_i \in Z'_S \cup N'_S$,
where $i \in [2, T-1]$.  Then
    
\begin{itemize}

    \item  suppose that $l,r \in [2,i-1]$, then:
    delete every arc $(c_{v,q}, d_{v,q})$, 
    with $l \leq q \leq r-1$

    \item  suppose that with $l,r \in [i+1,T-1]$, then:
    delete every arc $(c_{v,q}, d_{v,q})$, 
    with $l+1 \leq q \leq r$

    \item 
    suppose that $l \in [2,i]$ and
    $r \in [i,T-1]$, then: 
    delete every arc $(c_{v,q}, d_{v,q})$, 
    with $l \leq q \leq i-1$, and
    delete every arc $(c_{v,q}, d_{v,q})$, 
    with $i+1 \leq q \leq r$.
    % Delete $(c_{v,i-1},\vtneg{i})$ (if $j < i$)
    % and $(c_{v,i+1},\vtneg{i})$ (if $p > i$)

\end{itemize}

We see that by construction for all $v \in Z_S \cup N_S$, the number of arcs deleted in the gadget corresponding to $v$ is equal to the 
number of vertices in $X^* \cap \vof{v}$ minus one, that is the
span of $v$ in $X^*$.  
Since these vertices have span at most $k'$, it follows that we deleted at most $k'$ arcs from $H$. 
Denote by $H'$ the graph obtained after deleting the aforementioned arcs.
We argue that in $H'$, $s$ does not reach a forbidden pair.  To this end, we claim the following.

\begin{claim}\label{claim:generals-reach}
For $v \in Z_S \cup N_S$ and $t \in [T]$, 
if $s$ reaches $\vtpos{t}$ in $H'$, then $v_t \in X^*$, 
and if $s$ reaches $\vtneg{t}$ in $H'$, then $v_t \notin X^*$.
\end{claim}

Now, armed with the above claim, we can prove  that in $H'$, $s$ does not reach both vertices of a forbidden pair $q \in P$, thus concluding this direction of the proof.

\medskip

\noindent
($\Leftarrow$)
    Suppose that there is a set 
$F \subseteq D$ with at most $k'$ arcs such that $s$ does not reach a forbidden pair in $H - F$.  
Denote $H' = H - F$.  We construct $X^*$ from $F$, which will also show that it can be reconstructed from $F$ in polynomial time.
Define $X^* \subseteq V(G)$ as follows:

\begin{itemize}
    \item 
    for each $v \in M_S$, add every element of $X \cap \vof{M_S}$ to $X^*$;

     \item 
    for each $v_t \in V(G) \setminus \vof{M_S}$, we add $v_t$ to $X^*$ if and only if one of the following holds:
    (1) $\vtpos{t} \in V(H)$ and $s$ reaches $\vtpos{t}$ in $H'$; or (2) $\vtneg{t} \in V(H)$, and $s$ does \emph{not} reach $\vtneg{t}$ in $H'$;

    \item 
    for each $v_j, v_h \in X^*$ with $j < h$, add $v_t$ to $X^*$ for each $t \in [j+1,h-1]$.
     
\end{itemize} 

Note that $X^*$ agrees with $X$.  Indeed, for $v \in M_S$, there is no gadget corresponding to $v$ in the construction and thus we only add $X \cap \vof{v}$ to $X^*$.  For $u \in N_S$, consider $u_t \in N_S''$ and a neighbor $v_t$ of $u_t$ in $\vof{Z_S}$.
If $v_t \notin Z'_S$, Step~4 adds an undeletable arc from $s$ to $\vtpos{t}$, hence $s$ reaches that vertex and we put $v_t$ in $X^*$.  If $v_t \in Z'_S$, Step~4 adds $\{s, \vtneg{t}\}$ to $P$, and thus $s$ does not reach $\vtneg{t}$ in $H'$, and again we add $v_t$ to $X^*$.  Therefore, we add all the $\vof{Z_S}$ neighbors of $u_t$ to $X^*$, and so it agrees with $X$.
We can prove that $X^*$ covers every temporal edge of $G$ and that $sp(X^*) \leq k$. 
\end{proof}

\subsection*{Wrapping up}

Before concluding, 
we must show that we are able to use the results of~\cite{DBLP:journals/jacm/KratschW20} to get an FPT algorithm for $\DigraphCut{}$, as we have presented it.  
As we mentioned, the FPT algorithm in~\cite{DBLP:journals/jacm/KratschW20} studied the vertex-deletion variant and does not consider undeletable elements, but this is mostly a technicality.
Roughly speaking, in our variant, it suffices to replace each vertex with enough copies of the same vertex, and replace each deletable arc $(u, v)$ with a new vertex, adding arcs from the $u$ copies to that vertex, and arcs from that vertex to the $v$ copies.  Deleting $(u, v)$ corresponds to deleting that new vertex.  For undeletable arcs, we apply the same process but repeat it $k' + 1$ times.

\begin{lemma}
\label{lem:ConstDigraphFPT}
    The $\DigraphCut{}$ problem can be solved in time $O^*(2^{k})$, where $k$ is the number of arcs to delete.
\end{lemma}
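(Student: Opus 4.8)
The plan is to reduce our \DigraphCut{} formulation to the vertex-deletion variant of \textsc{Digraph Pair Cut} studied in~\cite{DBLP:journals/jacm/KratschW20}, which is solvable in time $O^*(2^{k'})$ where $k'$ is the number of vertices to delete, and whose input is a digraph, a source $s$, and a set of forbidden pairs (with all vertices deletable, except possibly $s$). The reduction must simultaneously (i) convert arc deletions into vertex deletions and (ii) render the non-deletable arcs truly undeletable. I would perform the conversion in a single construction and then argue correctness in both directions.

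First I would describe the construction. Given an instance $(H, s, P, D, k')$ of \DigraphCut{}, build a digraph $H'$ as follows. For the undeletability of arcs, replace every vertex $x \in V(H)$ by a set $C(x)$ of $k'+1$ copies $x^{(1)}, \dots, x^{(k'+1)}$ (the source $s$ is also blown up, but we will designate all its copies as undeletable, or equivalently note that deleting any copy of $s$ is never beneficial). For each undeletable arc $(x,y) \in A(H) \setminus D$, add all $k'+1$ arcs $(x^{(i)}, y^{(i)})$ for $i \in [k'+1]$ — or more robustly, a complete set of $k'+1$ internally disjoint length-two paths between the copy-sets, so that no choice of $k'$ vertex deletions can sever the connection from $C(x)$ to $C(y)$. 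For each deletable arc $(x,y) \in D$, introduce a single new vertex $a_{x,y}$, add arcs $(x^{(i)}, a_{x,y})$ for all $i$ and $(a_{x,y}, y^{(i)})$ for all $i$, and mark $a_{x,y}$ as the only deletable vertices of $H'$ (all copy vertices $x^{(i)}$ are made undeletable, e.g. by the standard trick of attaching $k'+1$ pendant in-structure, or simply by the fact that the target problem in~\cite{DBLP:journals/jacm/KratschW20} permits designating undeletable vertices — if not, we blow each $x^{(i)}$ up once more). Finally, for each forbidden pair $\{x,y\} \in P$, add to $P'$ the pair $\{x^{(1)}, y^{(1)}\}$; since the copies of $x$ all have identical reachability from (copies of) $s$ once we forbid deleting them, one representative suffices. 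Set $k'' = k'$.

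Second I would prove the equivalence. ($\Rightarrow$) Given $F \subseteq D$ with $|F| \le k'$ witnessing the \DigraphCut{} instance, delete $\{a_{x,y} : (x,y) \in F\}$ in $H'$; this is a set of at most $k'$ deletable vertices. One checks that $s$ reaches $z^{(1)}$ in $H' - \{a_{x,y}\}$ if and only if $s$ reaches $z$ in $H - F$: a directed path in one graph lifts/projects to one in the other, because deleting $a_{x,y}$ exactly kills all routes through the arc $(x,y)$, and the undeletable-arc gadgets preserve reachability verbatim. Hence no forbidden pair of $P'$ is jointly reached. ($\Leftarrow$) Conversely, given a set $U$ of at most $k'$ deletable vertices of $H'$ solving the pair-cut instance, observe $U$ consists only of vertices $a_{x,y}$ (the copies being undeletable), so $U = \{a_{x,y} : (x,y) \in F\}$ for some $F \subseteq D$ with $|F| \le k'$; by the same reachability correspondence, $F$ witnesses the original instance. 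The only subtlety is confirming that the $k'+1$-fold redundancy genuinely defeats any budget-$k'$ attack on an undeletable arc or vertex — a counting argument (pigeonhole: $k'$ deletions cannot hit all of $k'+1$ disjoint paths or copies) closes this.

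The main obstacle I expect is purely bookkeeping: ensuring the blow-up is done consistently so that (a) the $k'$-budget on the target side is not inadvertently inflated, (b) reachability from $s$ is faithfully mirrored across the $x \mapsto C(x)$ identification — in particular that the forbidden-pair representatives $x^{(1)}, y^{(1)}$ see exactly the same reachability as the original $x, y$ — and (c) the construction stays polynomial in $|H|$ and $k'$, which it does since we multiply vertex count by $k'+1$ and add one vertex per deletable arc. Running the $O^*(2^{k''})$ algorithm of~\cite{DBLP:journals/jacm/KratschW20} on $H'$ with $k'' = k'$ then yields the claimed $O^*(2^{k'})$ bound; since in our application $k' = k - sp(X) \le k$, this gives the stated $O^*(2^{k})$ running time for \DigraphCut{}.
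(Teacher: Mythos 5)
Your overall plan coincides with the paper's: reduce to the vertex-deletion variant of \textsc{Digraph Pair Cut} from~\cite{DBLP:journals/jacm/KratschW20}, blow every vertex up into $k'+1$ copies, represent each deletable arc by a single middle vertex joined to all copies of its endpoints, and give undeletable arcs $k'+1$-fold redundancy so that no budget-$k'$ deletion can sever them. The place where your construction genuinely diverges — and where it breaks — is the treatment of the forbidden pairs. You add only the single representative pair $\{x^{(1)},y^{(1)}\}$ for each $\{x,y\}\in P$, justifying this by assuming the copies can be \emph{declared} undeletable. But the target problem, as stated in the paper (and as used via Theorem~6.1 of~\cite{DBLP:journals/jacm/KratschW20}), allows deleting any vertex other than the source; there is no undeletable-vertex feature to lean on. Without it, the backward direction of your reduction fails: a solver can simply delete the named vertex $x^{(1)}$ (or $y^{(1)}$) at cost $1$ and thereby satisfy every pair containing it, and this deletion corresponds to no arc deletion in $H$ — in $H$ the vertex $x$ may remain reachable and the pair $\{x,y\}$ uncut. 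Your fallback ``blow each $x^{(i)}$ up once more'' does not repair this, because whichever specific copy you then name in the pair is again deletable at cost $1$; the problem is not the redundancy of the graph but the fact that a vertex occurring in a forbidden pair can always be removed to kill that pair.

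The paper's proof closes exactly this hole by a different device: instead of one representative pair, it adds \emph{all} $(k+1)^2$ cross pairs $\{u^i,v^j\}$, $i,j\in[k+1]$. Then satisfying the pairs forces $s$ to miss an entire copy-class of $u$ or of $v$; since all copies of a vertex have identical in-neighbourhoods (every middle vertex feeds all copies), deleting up to $k$ copies leaves at least one copy reachable, so copy deletions are provably useless and any solution can be assumed to consist only of middle vertices of deletable arcs. That assumption is what makes the map from vertex deletions back to arc deletions well defined. Your forward direction, budget accounting, and the reachability correspondence for the gadgets are fine (and your ``complete set of $k'+1$ internally disjoint length-two paths'' option for undeletable arcs is exactly the paper's gadget), but as written the reduction is only correct conditional on an undeletable-vertex primitive that the cited algorithm does not supply; replacing the representative pairs by the full set of copy pairs is the missing step.
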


We are able now to prove the main result of
our contribution.

% \ml{ML: I updated the complexity.  We need to $O^*$ notation because the DigraphCut result does not mention the polynomial factor.  If we want to add it, the polynomial factor is just the time to compute an $S-T$ cut.}
% \rd{That's fine, I will use the $O^*$ notation.
% Should we add a brief explanation of the $O^*$ notation,
% maybe in a footnote the first time we use it?}
% \ml{Yes, for now I added it in the intro.}

\begin{theorem}
\label{teo:final}
\TimeCover{} on a temporal
graph $G=(V_B, E, \TDom)$ can be solved
in time $O^*(2^{5k \log k})$.
\end{theorem}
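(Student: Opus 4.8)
The plan is to assemble Theorem~\ref{teo:final} from the three ingredients developed in the section: the iterative compression framework, the enumeration of feasible assignments (Theorem~\ref{teo:enumerateFA}), and the reduction to \DigraphCut{} (Lemma~\ref{lem:RestrictedSol}) together with the FPT algorithm for that problem (Lemma~\ref{lem:ConstDigraphFPT}). First I would set up the outer loop: given $G = (V_B, E, \TDom)$, add two edgeless dummy timestamps so that timestamps $1$ and $T$ carry no temporal edge (this costs only a constant factor and, as argued in the preliminaries, preserves the answer). Then order the base vertices $v_1, \dots, v_n$ arbitrarily, and maintain a temporal cover $S_i$ of span at most $k$ of the induced subgraph $G[\{v_1, \dots, v_i\}]$. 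The base case $i = 1$ is trivial since a single base vertex over $T$ timestamps admits the span-$0$ cover $\{(v_1)_1\}$ (indeed $G[\{v_1\}]$ is edgeless). For the inductive step, having $S_{i}$ for $G[\{v_1,\dots,v_i\}]$, we run the \ResTimeCover{} subroutine with input graph $G[\{v_1,\dots,v_{i+1}\}]$, new vertex $w = v_{i+1}$, and solution $S = S_i$; if it returns a temporal cover of span at most $k$, we set $S_{i+1}$ to it and continue, and if any iteration fails we report that no solution of span at most $k$ exists. If all $n$ iterations succeed, $S_n$ is a temporal cover of $G$ of span at most $k$. Correctness of the ``no''-answer follows because if $G$ (hence every $G[\{v_1,\dots,v_j\}]$) had a temporal cover of span at most $k$, then by Lemma~\ref{lem:CorrectBranch} the corresponding $\ResTimeCover$ instance would be a yes-instance and our subroutine would succeed.

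Next I would implement the \ResTimeCover{} subroutine by combining the two phases. Given $G$, $w$, $S$, we first compute the sets $V_S, V'_S, I_S, Z_S, Z'_S$ in polynomial time. By Theorem~\ref{teo:enumerateFA} we enumerate all feasible assignments $X$ in time $O(2^{4k\log k} T^3 n)$; there are at most $2^{O(k\log k)}$ of them. For each feasible assignment $X$ we build the \DigraphCut{} instance $(H, s, P, D, k')$ with $k' = k - sp(X) \le k$ as in Steps 1--4 of the construction; this is polynomial in the size of $G$ (the gadget for each $v \in Z_S \cup N_S$ has $O(T)$ vertices, and $P$ has $O(T^2)$ pairs per gadget, so $|V(H)|, |A(H)|, |P| = \mathrm{poly}(n, T)$). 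We then invoke the algorithm of Lemma~\ref{lem:ConstDigraphFPT}, which runs in time $O^*(2^{k'}) = O^*(2^k)$. By Lemma~\ref{lem:RestrictedSol}, there is a temporal cover of $G$ of span at most $k$ agreeing with $X$ if and only if this \DigraphCut{} instance is a yes-instance, and in that case the cover is recoverable in polynomial time from the returned arc set $F$. Ranging over all feasible assignments $X$, by Lemma~\ref{lem:CorrectBranch} some $X$ is agreed with by any fixed optimal temporal cover, so the subroutine is correct.

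Finally I would total the running time. One call to \ResTimeCover{} costs $2^{O(k\log k)}$ feasible assignments, each incurring $O^*(2^k)$ for the \DigraphCut{} call plus polynomial construction overhead, giving $2^{O(k\log k)} \cdot O^*(2^k) = O^*(2^{O(k\log k)})$; the enumeration itself is $O(2^{4k\log k} T^3 n)$. The outer iterative-compression loop multiplies this by $n$, which is absorbed into the $O^*$ notation. To land precisely on the claimed $O^*(2^{5k\log k})$ bound, I would account the constants carefully: the feasible-assignment enumeration contributes the $2^{4k\log k}$ factor, and multiplying by the $O^*(2^k) = O^*(2^{k \log k})$ (for $k \ge 2$, $2^k \le 2^{k\log k}$) factor from each \DigraphCut{} call yields $O^*(2^{5k\log k})$, with all other factors polynomial and hidden by $O^*$. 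The main obstacle in writing this up cleanly is not conceptual but bookkeeping: I must make sure the dummy-timestamp reduction is applied once at the start and is compatible with the assumption used throughout Section~\ref{sec:FPT} (edgeless first and last timestamps), verify that each $\ResTimeCover$ instance we feed to the subroutine genuinely satisfies its input requirements (in particular that $S_i$ restricted appropriately is a temporal cover of $G[\{v_1,\dots,v_i\}] - \{v_{i+1}\} = G[\{v_1,\dots,v_i\}]$ of span at most $k$), and confirm that the constant in the exponent works out to exactly $5$ rather than merely $O(k \log k)$, which requires pinning down the base cases $k \le 1$ separately or noting they are trivial.
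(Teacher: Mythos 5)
Your proposal is correct and follows essentially the same route as the paper: an iterative-compression outer loop over base vertices, where each \ResTimeCover{} instance is solved by enumerating feasible assignments (Theorem~\ref{teo:enumerateFA}, justified by Lemma~\ref{lem:CorrectBranch}) and deciding each via the reduction to \DigraphCut{} (Lemma~\ref{lem:RestrictedSol}) solved in $O^*(2^{k})$ time (Lemma~\ref{lem:ConstDigraphFPT}), with the same $O^*(2^{4k\log k}\cdot 2^{k}) = O^*(2^{5k\log k})$ accounting. The extra bookkeeping you flag (dummy timestamps, the base case, and that restrictions of a span-$k$ cover to prefixes of the vertex ordering remain valid solutions) is handled implicitly in the paper and poses no difficulty.
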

\begin{proof}
 First, we discuss the correctness of the
 algorithm we presented.
Assume that we have an ordering on 
the base vertices of $G$ and that $v$
is the first vertex of this ordering.
A  solution $S$ of \TimeCover{} on $G[\{v\}]$
is equal to $S = \emptyset$.

Then for $i $, with $i \in [2,|V_B|]$,
let $G_i$ be the temporal graph induced by
the first $i$ vertices and let $w$ be the $i+1$-th
vertex.
Given a solution $S$ of \TimeCover{} on instance 
$G_i$ of span at most $k$, 
we can decide whether
there exists a solution 
of \TimeCover{} on instance 
$G_{i+1}$ by computing whether 
there exists a solution $X^*$ of 
the \ResTimeCover{} problem on instance
$G_i$, $w$, $S$.
By Lemma~\ref{lem:CorrectBranch}
and by Theorem~\ref{teo:enumerateFA}
if there exists such an $X^*$, then 
there exists a feasible assignment $X$
that agrees with $X^*$.
By Lemma~\ref{lem:RestrictedSol} we can compute, via the reduction to \DigraphCut{},
whether there exists a solution of \ResTimeCover{}  on instance on instance
$G_i$, $w$, $S$, and if so obtain such a solution (if no such solution $X^*$ exists, then Lemma~\ref{lem:RestrictedSol} also says that we will never return a solution, since every feasible assignment $X$ that we enumerate will lead to a negative instance of \DigraphCut{}).  Thus the \ResTimeCover{} subproblem is solved correctly, and once it is solved on $G_{|V_B|}$, we have a solution to \TimeCover{}.

Now, we discuss the complexity of the algorithm. We must solve \ResTimeCover{} $|V_B|$ times.  For each iteration, by Theorem~\ref{teo:enumerateFA} 
we can enumerate the feasible assignments
in $O(2^{4k \log k} T^3 n)$ time.
For each such assignment, the
reduction from \ResTimeCover{} to
\DigraphCut{} requires polynomial time, and each generated instance can be solved in time $O^*(2^k)$.
The time dependency on $k$ is thus $O^*(2^{4k \log k} \cdot 2^k)$, which we simplify to $O^*(2^{5k \log k})$.
\end{proof}

\section{Conclusion}
\label{sec:conclusion}

We have presented a randomized FPT algorithm 
for the \TimeCover{} problem,
a variant of \VertexCover{} on temporal
graph recently considered for 
timeline activities summarizations.
% The algorithm is based on the iterative
% compression technique and on a complex
% reduction to the \DigraphCut{}  problem.
We point out some relevant future directions
on this topic: 
(1) to improve, if possible, the time complexity
of \TimeCover{} by obtaining a single exponential time algorithm (of the form $O^*(c^k)$);
% , which might be feasible if we improve our analysis of the number of feasible covers; 
(2) to establish whether \TimeCover{} admits a polynomial kernel, possibly randomized (which it might, since \DigraphCut{} famously admits a randomized polynomial kernel); and (3) to extend
the approach to other variants of \Untangling{}.

\bibliography{lipics-v2021-sample-article.bib}

\newpage

\section*{Appendix}

\subsection*{Proof of Lemma~\ref{lem:easybound}}

\setcounter{lemma}{3}

\begin{lemma}
\label{appendxi:lem:easybound}
Let $S$ be a solution of \TimeCover{} on instance $G[V_B \setminus \{w\}]$
and let $I_S$ be the associated set of 
timestamps.
Then $|I_S| \leq 2k$.
\end{lemma}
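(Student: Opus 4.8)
The plan is to bound $|I_S|$ directly by relating it to the span of $S$. Recall that $I_S$ is the set of timestamps $t$ for which some base vertex $u \in V_S \setminus \{w\}$ has its occurrence $u_t$ present in $S$. Since every $u \in V_S \setminus \{w\}$ has, by definition of $V_S$, a positive span in $S$, the key observation is that the timestamps of $I_S$ are ``witnessed'' by vertices whose contribution to $sp(S)$ is strictly positive, and I want to charge each element of $I_S$ against roughly half a unit of span.

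First I would recall that for each $u \in V_S \setminus \{w\}$, the set $S \cap \vof{u}$ occupies a contiguous block of timestamps, from $\delta(u, S)$ to $\Delta(u, S)$, of size $sp(u, S) + 1 \geq 2$. Hence the number of timestamps $t$ with $u_t \in S$ is exactly $sp(u, S) + 1$. Now $I_S = \bigcup_{u \in V_S \setminus \{w\}} \{t : u_t \in S\}$, so by a union bound $|I_S| \leq \sum_{u \in V_S \setminus \{w\}} (sp(u, S) + 1)$. The issue is that the ``$+1$'' terms could accumulate and this bound alone gives something like $sp(S) + |V_S|$, which is not $2k$. The fix is that for a vertex $u$ with $sp(u, S) = m \geq 1$, we have $sp(u, S) + 1 = m + 1 \leq 2m$, i.e. the block size is at most twice the span. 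Therefore $|I_S| \leq \sum_{u \in V_S \setminus \{w\}} (sp(u, S) + 1) \leq \sum_{u \in V_S \setminus \{w\}} 2\, sp(u, S) \leq 2\, sp(S) \leq 2k$, where the last inequality uses that $S$ is a temporal cover of $G - \{w\}$ of span at most $k$ and that spans are nonnegative.

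The main (and only) obstacle is the elementary inequality $m + 1 \leq 2m$ for integers $m \geq 1$, i.e. ensuring one does not naively lose an additive $|V_S|$ term; once the contiguity of $S \cap \vof{u}$ and the strict positivity of $sp(u, S)$ for $u \in V_S \setminus \{w\}$ are invoked, the rest is a direct summation. I would also note explicitly that $w$ is excluded from the definition of $I_S$, so its span (which may be positive) does not even need to be accounted for, making the bound $2k$ rather than something larger.
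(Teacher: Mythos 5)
Your proof is correct and takes essentially the same route as the paper: both charge the timestamps of $I_S$ against the strictly positive span of each vertex of $V_S \setminus \{w\}$, bounding $|I_S|$ by $\sum_u (sp(u,S)+1)$. The only cosmetic difference is that the paper absorbs the $+1$ terms via $|V_S \setminus \{w\}| \leq k$ (giving $k + k = 2k$), while you absorb them per vertex via $m+1 \leq 2m$; the two computations are equivalent.
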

\begin{proof}
The result follows from the fact that,
since $|V_S \setminus \{ w\}| \leq k$ and $V_S \setminus \{w\}$ contains only vertices of positive span,
$2k$ timestamps contribute at least $k$
to the span of $S$. 
\end{proof}

\subsection*{Proof of Lemma~\ref{lem:easypart}}

\setcounter{lemma}{4}

\begin{lemma}
\label{appendix:lem:easypart}
Let $S$ be a solution of \TimeCover{} on instance $G -\{w\}$.
Then, $sp(Z'_S) = 0$.  Moreover,
$Z'_S$
covers each temporal edge of $G -\{w\}$ not covered
by $V'_{S} \setminus \vof{w}$. 
% Let $S$ be a solution of \TimeCover{} on instance $G -\{w\}$.
% Then, %$S$ contains a set $Z \subseteq V_B \setminus\{w\} $ such that 
% $Z'_S$
% covers each temporal edge of $G -\{w\}$ not covered
% by $V'_{S} \setminus \{w\}$ and $sp(Z'_S) = 0$.
\end{lemma}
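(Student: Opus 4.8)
\textbf{Proof plan for Lemma~\ref{appendix:lem:easypart}.}
The plan is to prove the two claims in sequence, both by unwinding the definitions of the sets $V_S$, $V'_S$, $Z_S = V_B \setminus V_S$, and $Z'_S = S \setminus V'_S$. First I would handle $sp(Z'_S) = 0$. The key observation is that every base vertex $v$ whose occurrences appear at two or more distinct timestamps in $S$ was explicitly placed in $V_S$ by definition of $V_S$, and hence the vertices $\vof{v} \cap S$ were placed in $V'_S$ (note $w \notin V(G - \{w\})$, so $w$ contributes nothing to $S$ and can be ignored here). Therefore any base vertex $v$ with $\vof{v} \cap Z'_S \neq \emptyset$ must have $v \in Z_S = V_B \setminus V_S$, which means $v$ has span $0$ in $S$, i.e.\ at most one occurrence $v_t$ lies in $S$; consequently $\vof{v} \cap Z'_S$ is a single vertex and contributes $0$ to the span. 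Summing over all base vertices gives $sp(Z'_S) = 0$.

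For the second claim, I would take an arbitrary temporal edge $u_t v_t \in E(G - \{w\})$ and argue that if it is not covered by $V'_S \setminus \vof{w}$, then it is covered by $Z'_S$. Since $S$ is a temporal cover of $G - \{w\}$, at least one of $u_t, v_t$ belongs to $S$; say $u_t \in S$ without loss of generality. Now $S = V'_S \cup Z'_S$ is a partition (by the definition $Z'_S = S \setminus V'_S$, and $V'_S \cap \vof{w} $ is irrelevant since $w$ is absent from $G-\{w\}$), so either $u_t \in V'_S \setminus \vof{w}$ — in which case $V'_S \setminus \vof{w}$ covers the edge, contradicting our assumption — or $u_t \in Z'_S$, in which case $Z'_S$ covers $u_t v_t$. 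This establishes the claim.

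I do not expect a serious obstacle here; the lemma is essentially bookkeeping. The one point requiring a little care is the interaction with $w$: the set $V'_S$ as defined includes all of $\vof{w}$, but $w \notin V_B(G - \{w\})$, so no occurrence of $w$ ever appears in $S$ or in any edge of $G - \{w\}$; I would state this explicitly so that the partition $S = (V'_S \setminus \vof{w}) \,\dot\cup\, Z'_S$ is transparent and the covering argument goes through cleanly. A secondary subtlety is the degenerate case where a base vertex of $Z_S$ has \emph{no} occurrence in $S$ at all — this cannot happen because $S$, being a temporal cover of $G - \{w\}$, contains at least one occurrence of every base vertex of $G - \{w\}$ by Condition~1 of a temporal cover, so every $v \in Z_S$ contributes exactly one vertex to $Z'_S$ and exactly $0$ to the span.
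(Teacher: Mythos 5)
Your proof is correct and follows essentially the same route as the paper's (much terser) argument: the edges missed by $V'_S \setminus \vof{w}$ must be covered by $Z'_S$ because $S$ is a cover and $S$ decomposes as $(V'_S \setminus \vof{w}) \cup Z'_S$, while $sp(Z'_S)=0$ follows directly from the definitions of $V_S$, $V'_S$ and $Z'_S$. Your added bookkeeping about $w$ and about each $v \in Z_S$ contributing exactly one vertex is accurate and simply makes explicit what the paper leaves implicit.
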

\begin{proof}
% Consider $S$ and define then $Z$ as follows:
% \[
% Z =\{ u \in G[V_B \setminus \{w\}]: \exists! u_p \in S \}.
% \]
Since $S$ is a cover of $G -\{w\}$, it follows
that the temporal edges not covered by $V'_{S} \setminus \{w\}$ must be covered by vertices of 
$Z'_S$. 
Furthermore, by the definition of $Z'_{S}$, 
% since $Z$, for each base vertex, contains 
% a single associated vertex, 
it follows that $sp(Z'_S)=0$.
\end{proof}

\subsubsection*{Proof of 
Theorem~\ref{teo:enumerateFA}}

\setcounter{theorem}{8}

\begin{theorem}
\label{appendix:teo:enumerateFA}
    The above steps enumerate every feasible assignment in time 
    $O(2^{4 k \log k} T^3 n)$, where $n = |V_B|$.
\end{theorem}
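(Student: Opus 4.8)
The plan is to analyze the running time of the four enumeration steps described just before the theorem statement, counting the number of branches created at each step and multiplying by the polynomial overhead per branch. First I would bound the number of branches in Step (1): a non-empty assignment $X_w$ of $\{w\}$ of span at most $k$ is determined by a contiguous interval of timestamps of length at most $k+1$, so there are at most $O(T \cdot k)$ choices, which I can crudely bound by $O(T^2)$ since $k \le T$. Step (2) is deterministic (no branching), and adds at most one vertex $v_t$ per edge $v_t w_t$ with $w_t \notin X_w$, taking polynomial time. The crucial branching happens in Step (3): for each $v \in V_S \setminus \{w\}$ whose $\vof{v}$ is still untouched, we branch into $|I_S| + 1$ options. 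Since $|V_S \setminus \{w\}| \le k$ (as every such vertex has positive span in $S$ and $sp(S) \le k$) and $|I_S| \le 2k$ by Lemma~\ref{lem:easybound}, this contributes at most $(2k+1)^k = 2^{O(k \log k)}$ branches. Step (4) then, for each of the at most $k$ base vertices in $V_S \setminus \{w\}$ with a non-empty intersection, branches over assignments $X_v$ of $\{v\}$ of span at most $k$ extending the already-chosen vertices; each such assignment is again a contiguous interval containing the forced vertex, so there are at most $O(k^2)$ (or $O(T^2)$) choices per vertex, giving another $2^{O(k \log k)}$ factor.

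Next I would assemble the product: the total number of branches is $O(T^2) \cdot (2k+1)^k \cdot (k^2)^k = 2^{O(k \log k)} \cdot \mathrm{poly}(T)$, and I would verify that the constant in the exponent works out to $4$, i.e. $(2k+1)^k \cdot (k+1)^{2k} \le 2^{4k\log k}$ for $k$ large enough (roughly, $2k+1 \le k^2$ and $(k+1)^2 \le k^3$ eventually, so the product is at most $k^{5k}$... so I would be a little more careful here: $(2k+1)^k \le k^{2k}$ and the Step~(4) assignments per vertex number at most $k+1$ choices of left endpoint times at most... actually each assignment of span at most $k$ containing a fixed vertex is determined by how far left and how far right it extends, at most $(k+1)^2$ possibilities, so $((k+1)^2)^k \le k^{2k}$ for large $k$, and together with Step~(1)'s $O(T^2)$ absorbed into the polynomial factor we get $\le k^{4k} = 2^{4k \log k}$). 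So the exponent bound $2^{4k\log k}$ holds, with the caveat that small values of $k$ are handled by adjusting the hidden constant. Then I would account for the per-branch cost: each step does work polynomial in $n = |V_B|$ and $T$ — scanning edges, reading off intervals, etc. — and I would check that $O(T^3 n)$ suffices to cover all the bookkeeping (e.g. iterating over temporal edges at each timestamp in Steps (2) and (4), which is at most $O(T n^2)$ but can be charged to $T^3 n$ under the standard assumption bounding the number of edges, or by a more careful count; if the paper's convention makes $|E| = O(T n^2)$ then I would instead write the bound with an explicit $n^2$, but I will follow the statement as given).

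Finally I would argue correctness of the enumeration, i.e. that every feasible assignment $X$ is produced by exactly one sequence of branching choices. Given a feasible assignment $X$, Condition 3 says $X \cap \vof{w}$ is a non-empty assignment of $\{w\}$ of span at most $k$ (by Condition 1), so it is selected in some branch of Step (1). Condition 2 forces, for every edge $v_t w_t \in E(G[V_S])$ with $w_t \notin X \cap \vof{w}$, that $v_t \in X$; these are exactly the vertices added deterministically in Step (2), so they are consistent with $X$. For the remaining $v \in V_S \setminus \{w\}$ not yet touched, Condition 4 gives three cases: if $X \cap \vof{v} = \emptyset$ we take the ``add nothing'' branch in Step (3); if $X \cap \vof{v}$ intersects $I_S$ we take the branch in Step (3) choosing an appropriate $v_t$ with $t \in I_S$ that lies in $X$; the third possibility of Condition 4 is already handled by Step (2). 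Then Step (4) extends each touched $\vof{v}$ to the full contiguous interval $X \cap \vof{v}$, which is an assignment of $\{v\}$ of span at most $k$ (Condition 1) containing the already-chosen vertex, hence one of the enumerated options. Conversely, any $X$ output by the procedure satisfies all four conditions by construction of the steps. The main obstacle I anticipate is not conceptual but arithmetic: pinning down the precise exponent $4k\log k$ from the product of the Step (3) factor $(2k+1)^k$ and the Step (4) factor, making sure the polynomial-in-$T$ factors from Step (1) and Step (4)'s interval choices are genuinely absorbed into the stated $T^3 n$ rather than inflating the exponential term, and confirming no subtle double-branching (producing the same $X$ twice) occurs — which would only affect the running time, not correctness, but should be noted.
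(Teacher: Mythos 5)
Your proposal is correct and follows essentially the same route as the paper's proof: the same branch counts per step ($O(T^2)$ for Step (1), $(2k+1)^k$ for Step (3) via $|I_S|\le 2k$ and $|V_S\setminus\{w\}|\le k$, $O(k^2)$ endpoint choices per vertex in Step (4)), the same arithmetic yielding $2^{4k\log k}$ for $k\ge 3$ with small $k$ absorbed into constants, and the same case analysis showing every feasible assignment is hit by some branch. Your only hesitation, the per-branch cost, is resolved exactly as the paper does: Step (2) only inspects edges incident to $\vof{w}$, of which there are at most $Tn$, so the polynomial factor $T^3 n$ suffices.
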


\begin{proof}
%     Todo.  We do need to argue that all of them are enumerated.  Then previous arguments should work.

% %%%%%%%%%%%%%%%%%%%%%%%%%%%%%%
% \rd{tentative proof.}

We first argue that every feasible assignment is enumerated.  Consider a feasible assignment
$X$.
First, consider the (non-empty) intersection
$X_w = X \cap \vof{w}$. 
Since in Step (1) we branch
into every non-empty assignment
of $\{w\}$, then we eventually enumerate 
$X_w$.  In what follows, we assume that we are in the branch where $X_w$ is added.

Now, consider a vertex 
$v \in V_S \setminus \{ w \}$
and let $X_v = X \cap \vof{v}$.
If no vertex of $\vof{v}$ belongs
to $X$ (hence $X_v$ is empty), then 
Step (3) enumerates
this case (note that Step (2) does not add a vertex of $\vof{v}$ either: $X_v$ empty implies that there is no edge of the form $v_tw_t$ with $w_t \notin X$, since $X$ must cover this edge).
Assume that $X_v$ is non-empty.  By the definition of a feasible assignment, some $v_t$ is in $X_v$ to cover an edge incident to a vertex of $\vof{w}$ not covered by $X_w$, or $v_t$ is in a timestamp of $I_S$.
In the former case, $v_t$ is added in Step (2), and in the latter case, one of the branches of 
Step (3) will add $v_t$ to the set under construction.  
Since the span of $X$ and hence
of $X_v$ is at most $k$,
it follows that $X_v$ is an assignment of $\{v\}$ of span at most $k$, and Step (4) will branch into a case where it adds $X \cap \vof{v}$.  It follows that $X$ will be enumerated at some point.

Now, we discuss the number of
feasible assignments enumerated by Steps (1) -- (4).
Step (1) is computed in $O(T^2)$ time, as there are $O(T^2)$ possible non-empty intervals $X_w$.
For each branch defined at Step (1),
Step (2) can be computed in  
$Tn$ time
%$O(|V(G)| + |E(G)|)$ time,
as the vertices in $w$ can have at most $Tn$ neighbors.
%$O(|V(G)| + |E(G)|)$ neighbors.
% there are at most $T$ temporal edges
% $v_p w_p$ \ml{[ML: I think each $w_p$ can have many neighbors in timestamp $p$, no?.  Doesn't really matter, but I think it's safe to say this step takes time $O(|V(G)| + |E(G)|)$]}.

Step (3), for each vertex $v \in V_S \setminus \{w\}$, branches
into at most $2k + 1$ cases, as 
from Lemma~\ref{lem:easybound} we have that
$|I_S| \leq 2k$. 
% since
% $|V_S \setminus \{ w\}| \leq k$
% and the overall span of $S$
% is at most $k$.
Since there are at most $k$ vertices in
$V_S \setminus \{w\}$,
the number of branches explored 
in Step (3) is at most $(2k + 1)^k$. 

As for Step (4), 
for each branch defined in Step (3)
for $v \in V_S \setminus \{w\}$ such that $X \cap \vof{v} \neq \emptyset$,
it branches over $O(k^2)$
%$2k$ 
possible
choices of timestamps $a$, $b$ 
that are endpoints of $X_v$.  Again since $|V_S| \leq k$, the number of branches explored in Step (4) is at most $O(k^{2k})$.

Thus the overall time to enumerate the feasible assignments with 
Steps (1) -- (4) is
$O(T^2 \cdot Tn \cdot (2k+1)^k k^{2k})$, which is $O(2^{4 k \log k} T^3 n)$ for $k \geq 3$ (and if $k \leq 2$, it is constant and this still holds).
%$O(T^2 (|V(G)| + |E(G)|) (2k+1)^k k^{2k})$.
\end{proof}

\subsection*{Proof of Observation~\ref{obs:not-disallowed}}

\setcounter{lemma}{9}

\begin{observation}\label{appendix:obs:not-disallowed}
Let $u_t v_t \in E(G)$ such that $u \in N_S$ and $v \notin M_S$.  Then $v \in Z_S$ and, if $u_t \notin N''_S$, we have $v_t \in Z'_S$.
\end{observation}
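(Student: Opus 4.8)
The plan is to derive both claims directly from the defining conditions of a feasible assignment and of the temporal cover $S$; each part is short once the right preliminary observation is in place. That observation is the following: since $w \in V_S$ and Condition~3 of a feasible assignment forces $X \cap \vof{w} \neq \emptyset$, we have $w \in M_S$. Hence the hypothesis $u \in N_S$ gives $u \neq w$, so $u \in V_S \setminus \{w\}$ and $u$ has positive span in $S$. I will keep track carefully of which graph ($G$, $G[V_S]$, or $G - \{w\}$) the edge $u_t v_t$ lives in at each step, which is exactly why pinning down $u \neq w$ and (later) $v \neq w$ up front is useful.

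For the first claim I would argue by contradiction. Suppose $v \notin Z_S$, that is, $v \in V_S$. Then both endpoints of $u_t v_t$ lie in $\vof{V_S}$, so $u_t v_t \in E(G[V_S])$. By Condition~2 of a feasible assignment, $X$ covers this edge, so $u_t \in X$ or $v_t \in X$. But $u \in N_S$ means $X \cap \vof{u} = \emptyset$, so $u_t \notin X$; hence $v_t \in X$, giving $X \cap \vof{v} \neq \emptyset$ and therefore $v \in M_S$, contradicting the hypothesis $v \notin M_S$. Thus $v \in Z_S$; in particular $v \neq w$.

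For the second claim, assume in addition $u_t \notin N''_S$. Since $u \in N_S$ we have $u_t \in \vof{N_S}$, so $u_t \notin N''_S$ forces $t \notin I_S$. Because $u \neq w$ and $v \neq w$, the edge $u_t v_t$ belongs to $G - \{w\}$, and since $S$ is a temporal cover of $G - \{w\}$ it covers $u_t v_t$, so $u_t \in S$ or $v_t \in S$. Now $u \in V_S \setminus \{w\}$, and by the definition of $I_S$ any occurrence $u_t \in S$ of a base vertex of $V_S \setminus \{w\}$ would place $t$ in $I_S$; since $t \notin I_S$, we get $u_t \notin S$, hence $v_t \in S$. Finally, $v \in Z_S$ means $v \notin V_S$ and $v \neq w$, so $v_t \notin V'_S$ by the definition of $V'_S$; therefore $v_t \in S \setminus V'_S = Z'_S$, as claimed.

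I do not anticipate a genuine obstacle here: the statement is essentially a bookkeeping consequence of the definitions. The only subtlety is the repeated need to certify that the edge under consideration lies in the correct (sub)graph so that the covering properties of $X$ and of $S$ apply, and this is handled once and for all by first observing $w \in M_S$ and then $v \in Z_S$.
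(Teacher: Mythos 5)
Your proof is correct and follows essentially the same route as the paper's: both parts are read off from Condition 2 of a feasible assignment (to rule out $v \in N_S$) and from the fact that $u_t \notin N''_S$ forces $t \notin I_S$, hence $u_t \notin S$, so $S$ covers the edge via $v_t \in S \setminus V'_S = Z'_S$. Your extra bookkeeping (noting $w \in M_S$, hence $u, v \neq w$, so the edge lies in $G - \{w\}$ and $S$ indeed covers it) just makes explicit steps the paper leaves implicit.
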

\begin{proof}
    There cannot be an edge $u_t v_t$ between two vertices of $\vof{N_S}$, since $X$ must cover every edge in $G[V_S]$ and contains no vertices of $\vof{N_S}$.  Thus $v \notin N_S$, and since $v \notin M_S$, we have $v \in Z_S$.  
    Next suppose that $u_t \notin N''_S$.  Then $t \notin I_S$, implying that $u_t \notin S$.  Hence, the edge must be covered by $v_t$, which is in $S \setminus V'_S = Z'_S$.
\end{proof}

\subsection*{Proof of Lemma~\ref{lem:RestrictedSol}}

\setcounter{lemma}{10}

\begin{lemma}
\label{appendix:lem:RestrictedSol}
There exists a solution of \ResTimeCover{}
that agrees with $X$
if and only if there is $F \subseteq D$ with $|F| \leq k'$ such that $s$ does not reach a forbidden pair in $H - F$.
Moreover, given such a set $F$, a solution of \ResTimeCover{}
can be computed in polynomial time.
\end{lemma}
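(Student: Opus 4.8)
The plan is to prove Lemma~\ref{lem:RestrictedSol} by establishing both directions of the equivalence, matching the construction of $H$ to the structure of a temporal cover $X^*$ via the interpretation rules (p1) and (p2). For the forward direction ($\Rightarrow$), I would start from a solution $X^*$ of \ResTimeCover{} that agrees with $X$, normalize it so that every base vertex has at least one occurrence in $[2,T-1]$ (which does not affect the span since timestamps $1$ and $T$ are edgeless), and then define the deletion set $F$ gadget by gadget: for each $v \in Z_S \cup N_S$, delete exactly the arcs $(c_{v,q},d_{v,q})$ corresponding to the interval of $X^* \cap \vof{v}$, splitting into the three cases according to whether this interval lies entirely left of $i$, entirely right of $i$, or straddles $i$ (where $v_i$ is the designated vertex in $Z'_S \cup N'_S$). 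The count of deleted arcs in each gadget equals $|X^* \cap \vof{v}| - 1 = sp(v, X^*)$, so summing over $Z_S \cup N_S$ and using that $X^*$ agrees with $X$ on $M_S$ gives $|F| \le k - sp(X) = k'$. The crux is then Claim~\ref{claim:generals-reach}: I would prove by induction along the directed paths in the gadget that $s$ reaching $\vtpos{t}$ forces $v_t \in X^*$ and $s$ reaching $\vtneg{t}$ forces $v_t \notin X^*$; with this, each forbidden pair — whether the Step~2 pairs $\{d_{v,h},b_{v,j}\}$ and $\{c_{v,h},d_{v,j}\}$ internal to a gadget, the Step~3 pairs $\{\utneg{t},\vtneg{t}\}$ encoding edge coverage, or the Step~4 pairs $\{s,\vtneg{t}\}$ encoding forced inclusions — is shown to be separated, because otherwise $X^*$ would fail to be a valid assignment, fail to cover an edge, or fail to agree with $X$.

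For the reverse direction ($\Leftarrow$), I would take $F \subseteq D$ with $|F| \le k'$ separating all forbidden pairs, set $H' = H - F$, and construct $X^*$ explicitly: keep $X \cap \vof{M_S}$ on the $M_S$ part, and for each $v_t \notin \vof{M_S}$ include $v_t$ iff $s$ reaches $\vtpos{t}$ (when that vertex exists) or $s$ fails to reach $\vtneg{t}$ (when that vertex exists), then take the interval closure so that $X^*$ is a genuine assignment. I would first verify that $X^*$ agrees with $X$ — the $M_S$ part is immediate, and for $u \in N_S$ with $u_t \in N_S''$ and a neighbor $v_t \in \vof{Z_S}$, Step~4 guarantees (via an undeletable arc $(s,\vtpos{t})$ if $v_t \notin Z'_S$, or via the pair $\{s,\vtneg{t}\}$ if $v_t \in Z'_S$) that $v_t$ is forced into $X^*$. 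Then I would check edge coverage: for each temporal edge, either it was handled in Step~3 (the pair $\{\utneg{t},\vtneg{t}\}$ forces one endpoint included, or the undeletable arc $(\utneg{t},\vtpos{t})$ propagates reachability so that excluding $u_t$ forces including $v_t$), or at least one endpoint lies in $\vof{M_S}$ or a forced set and is covered by construction; the key structural fact making Step~3 exhaustive is that every such edge has at least one endpoint in $Z'_S \cup N'_S$, by Observation~\ref{obs:not-disallowed}. Finally, for the span bound, the interval closure of $X^* \cap \vof{v}$ for $v \in Z_S \cup N_S$ spans exactly the positive vertices reachable from $s$ in the gadget, and the internal forbidden pairs of Step~2 force one deletable arc $(c_{v,q},d_{v,q})$ per "gap" — so $sp(v,X^*)$ is at most the number of arcs of $F$ inside $v$'s gadget; summing gives $sp(X^*) \le sp(X) + k' = k$.

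The main obstacle I anticipate is the gadget analysis in Step~2 — specifically, proving that the span contribution of each $v \in Z_S \cup N_S$ is tightly controlled by the number of deleted arcs in both directions. The subtlety is that $X^*$ is allowed to contain vertices of $N_S$ at timestamps in $I_S$ (the definition of agreement forbids nothing there), and the designated vertex $v_i$ may or may not be in $X^*$; the gadget has to behave correctly in the straddling case where the reachable positive vertices $v_l^+, \dots, v_r^+$ surround the negative vertex $\vtneg{i}$, forcing $h-1$ deletions among $[l,r] \setminus \{i\}$ rather than $h-1$ among a contiguous block. Handling the two "halves" of the gadget (the chain through $b_{v,1},\dots,b_{v,i-1}$ and $c_{v,1},\dots,c_{v,i-1}$ on the left, mirrored on the right) and verifying that the four families of forbidden pairs in Step~2 precisely enforce "no gaps in the deleted block on each side, and the deleted arcs exactly index the span" is where the bookkeeping is densest. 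I would prove this as a self-contained sublemma about a single gadget in isolation — characterizing, for a given set of positive vertices reachable from $s$, the minimum number of deletable arcs needed to separate all internal forbidden pairs, and showing it equals (number of reachable positives) $- [\,v_i \in X^*\,]$ — and then apply it uniformly. Everything else (edge coverage, agreement, the $M_S$ accounting) is comparatively routine once the reachability semantics (p1)--(p2) are pinned down.

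\begin{proof}

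\end{proof}
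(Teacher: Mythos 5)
Your proposal is correct and takes essentially the same route as the paper's proof: the same per-gadget arc deletions split into the three cases around the designated $v_i$, the same induction-on-distance reachability claim used to separate the Step~2, Step~3 and Step~4 forbidden pairs, and the same reverse-direction construction of $X^*$ from reachability plus interval closure, with agreement, edge coverage and the span bound argued exactly as the paper does (the paper simply carries out the gadget bookkeeping inline by case analysis rather than as a standalone single-gadget sublemma). One cosmetic slip to fix when writing the sublemma: the forced number of deletions in the gadget of $v$ is $|X^*\cap\vof{v}|-1$, i.e.\ (number of reachable positive vertices) $-\,[\,v_i\notin X^*\,]$, not (number of reachable positives) $-\,[\,v_i\in X^*\,]$ --- harmless, since you state the correct count of $h-1$ deletions elsewhere in both directions.
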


\begin{proof}
($\Rightarrow$) Suppose that there exists
a solution $X^*$ of \ResTimeCover{}
that agrees with $X$.
By definition of \ResTimeCover{} $X^*$ has span at most $k$. 
Note that for $v \in M_S$, agreement requires that $X^* \cap \vof{v} = X \cap \vof{v}$, and so the span of $v$ in $X^*$ is the same as the span of $v$ in $X$.  Thus 
\[
\sum_{v \in Z_S \cup N_S} sp(v, X^*) \leq k - sp(X) = k'.
\]
We may assume that for every $v \in V_B$, at least one of $v_2, \dots, v_{T-1}$ is in $X^*$, as otherwise we add one arbitrarily without affecting the span (if only $v_1$ or $v_T$ is in $X^*$, remove it first).
For each $v \in Z_S \cup N_S$, consider the gadget corresponding to $v$ in $H$ and delete some of its dashed arcs as follows (we recommend referring to 
Figure~\ref{fig:GeneralTempogadgetsb}).

First, if only one of $\vof{v}$ is in $X^*$, no action is required on the gadget.
So assume that $X^* \cap \vof{v}$ has at least two vertices; in the following
we denote
$v_l = v_{\delta(v,X^*)} $ and
$v_r = v_{\Delta(v,X^*)}$ the vertices
associated with $v$
having minimum and maximum timestamp, respectively, contained in $X^*$.  
We assume that $l,r \in [2, T-1]$ and $l < r$.
%$2 \leq l < r \leq T-1$.
Note that $X^* \cap \vof{v} = \{ v_l, v_{l+1}, \dots, v_r\}$.

Let $v_i \in Z'_S \cup N'_S$,
where $i \in [2, T-1]$.  Then
    
\begin{itemize}

    \item  suppose that $l,r \in [2,i-1]$, then:
    delete every arc $(c_{v,q}, d_{v,q})$, 
    with $l \leq q \leq r-1$

    \item  suppose that with $l,r \in [i+1,T-1]$, then:
    delete every arc $(c_{v,q}, d_{v,q})$, 
    with $l+1 \leq q \leq r$

    \item 
    suppose that $l \in [2,i]$ and
    $r \in [i,T-1]$, then: 
    delete every arc $(c_{v,q}, d_{v,q})$, 
    with $l \leq q \leq i-1$, and
    delete every arc $(c_{v,q}, d_{v,q})$, 
    with $i+1 \leq q \leq r$.
    % Delete $(c_{v,i-1},\vtneg{i})$ (if $j < i$)
    % and $(c_{v,i+1},\vtneg{i})$ (if $p > i$)

\end{itemize}

We see that by construction for all $v \in Z_S \cup N_S$, the number of arcs deleted in the gadget corresponding to $v$ is equal to the 
number of vertices in $X^* \cap \vof{v}$ minus one, that is the
span of $v$ in $X^*$.  
Since these vertices have span at most $k'$, it follows that we deleted at most $k'$ arcs from $H$. 
Denote by $H'$ the graph obtained after deleting the aforementioned arcs.
We argue that in $H'$, $s$ does not reach a forbidden pair.  To this end, we claim the following.

\begin{claim}\label{claim:generals-reach}
For $v \in Z_S \cup N_S$ and $t \in [T]$, 
if $s$ reaches $\vtpos{t}$ in $H'$, then $v_t \in X^*$, 
and if $s$ reaches $\vtneg{t}$ in $H'$, then $v_t \notin X^*$.
\end{claim}
\begin{proof}
The proof is by induction on the distance between $s$ and the vertex. 
As a base case, consider the out-neighbors of $s$ in $H'$.  Suppose that $\vtpos{t}$ is such that $(s, \vtpos{t}) \in A(H')$.
This arc was added to $H$ by Step~\ref{step:s-edges} because $G$ contains a temporal edge $u_t v_t$ with 
$u_t \in (\vof{M_S} \setminus X) \cup N''_S$ and $v_t \in \vof{Z_S}$.  
If $u_t \in \vof{M_S} \setminus X$, then $u_t \notin X$ and $u_t \notin X^*$ either, since it agrees with $X$.  Thus we must have $v_t \in X^*$ to cover the edge.  
If $u_t \in N''_S$, then $v_t \in X^*$ holds by the definition of agreement.
By inspecting the construction of $H$, we see that $s$ does not have an out-neighbor of the form $\vtneg{t}$, and so this suffices for the base case.

Now consider a vertex of the form $\vtpos{t}$ or $\vtneg{t}$ at distance greater than $1$ from $s$ in $H'$, and assume by induction that the claim holds for vertices of this form at a smaller distance.  
Suppose that this vertex is $\vtpos{t}$.  By inspecting the construction, we see that the only possible in-neighbors of $\vtpos{t}$ are either $s$, or some $\utneg{t}$.  The $s$ case was handled as a base case, and so we assume the latter.  
Thus any shortest path from $s$ to $\vtpos{t}$ ends with an arc $(\utneg{t}, \vtpos{t})$.  Since $s$ reaches $\utneg{t}$ with a shorter path, we know by induction that $u_t \notin X^*$.  Moreover, the arc $(\utneg{t}, \vtpos{t})$ must have been created on Step~\ref{step:edges}, and so $u_t v_t \in E(G)$.  Therefore, $v_t \in X^*$ must hold to cover $u_tv_t$, as desired.

So consider instead a vertex of the form $\vtneg{t}$ that $s$ reaches in $H'$.  Assume for contradiction that $v_t \in X^*$. 
By inspecting the construction, we see that the only in-neighbors of $\vtneg{t}$ belong to the gadget corresponding to $v$.
Moreover, the only way to reach $\vtneg{t}$ from $s$ is to go through some $\vtpos{j}$, where $j \in [T] \setminus \{t\}$, and then through some other vertices of the gadget.
Consider a shortest path from $s$ to $\vtneg{t}$ in $H'$, and let $\vtpos{j}$ be the first vertex of the gadget corresponding to $v$ in this path.  By induction, we may assume that $v_j \in X^*$, and hence the span of $v$ is at least one,
since we are currently assuming that $v_t \in X^*$.  

Now, the existence of $\vtneg{t}$ implies that $v_t \in Z'_S \cup N'_S$.
    Consider $\{v_l,v_{l+1}, \dots, v_r\} \subseteq X^*$.  Then we have $l \in [2,t]$, $r \in [t, T - 1]$, 
    and $j \in [l,r] \setminus \{t\}$. In this case, all the arcs
    $(c_{v,y}, d_{v,y})$,
    with $y \in [l,r] \setminus \{t\}$ are removed.  Moreover, by inspecting the steps of the construction, we see that the only out-neighbors of $\vtpos{j}$ are $b_{v,j}$ and $c_{v, j}$. 
    Thus $\vtpos{j}$ can only reach $\vtneg{t}$ through a $c_{v, y}$ and then a $d_{v, y}$ vertex.  
    However, because of our deletions, $v_j^+$ cannot reach any 
    vertex $d_{v,y}$, thus it cannot
    reach $\vtneg{t}$, leading to a contradiction.
We deduce that $v_t \notin X^*$,
which concludes the proof of the claim.
\end{proof}

Now, armed with the above claim, assume for contradiction that in $H'$, $s$ reaches both vertices of a forbidden pair $q \in P$.
If $q$ was created on Step~\ref{step:edges}, then $q = \{\utneg{t}, \vtneg{t}\}$, where $u_t v_t$ is an edge of $G$.  
By Claim~\ref{claim:generals-reach}, this implies that $u_t, v_t \notin X^*$, a contradiction since $X^*$ would not cover the edge.
If $q$ was created on Step~\ref{step:s-edges}, then $q = \{s, \vtneg{t}\}$, where there is an edge $u_t v_t \in E(G)$ such that 
$u_t \in (\vof{M_S} \setminus X) \cup N''_S$ and $v_t \in \vof{Z_S}$.
Under the assumption that $s$ reaches $\vtneg{t}$, Claim~\ref{claim:generals-reach} implies that $v_t \notin X^*$. 
If $u_t \in N''_S$, this contradicts the fact that $X^*$ agrees with $X$, since $v_t$ should be in $X^*$.  
If $u_t \in \vof{M_S} \setminus X$, then $u_t \notin X$ implies that $u_t \notin X^*$ , reaching a contradiction since the edge $u_t v_t$ is not covered.

We may thus assume that $q$ was created on Step~\ref{step:gadget}.  Let $v \in Z_S \cup N_S$ be the base vertex for which the corresponding gadget contains the two vertices of $q$.

%Assume that $q = \{d_{v,h}, b_{v,j}\}$ or 
%$q = \{c_{v,h}, d_{v,j}\}$, for some 
%$h,j \in [2,i-1]$.
Let $v_i \in Z'_S \cup N'_S$, where 
$i \in [2, T - 1]$. 
% previously was [2,i-1]$.  
Assume that $q$ is 
$\{d_{v,h}, b_{v,j}\}$ with $h < j < i$.
Then $s$ must reach $\vtpos{l}$,
$c_{v,l}$, $c_{v,h}$ (possibly identical
to $c_{v,l}$) and $d_{v,h}$, with $l \leq h$.
On the other hand, since $s$ reaches $b_{v,j}$, we have that $s$ must reach 
$\vtpos{r}$ and $b_{v, r}$, where $r \in [j, i - 1]$.
 By Claim~\ref{claim:generals-reach}, $v_l, v_r \in X^*$, in which case we deleted the deletable arc 
$(c_{v,h}, d_{v,h})$ (since $h < j$, and either the first or third situation arises in our list of three cases of arc deletions).
Thus $s$ cannot reach $d_{v,h}$.

Likewise, assume that $q$ is 
$\{d_{v,h}, b_{v,j}\}$ with $i < j < h$.
Then $s$ must reach 
$c_{v,r}$, $c_{v,h}$ (possibly identical
to $c_{v,r}$) and $d_{v,h}$, with $r \geq h$.
On the other hand, $s$ must reach 
$\vtpos{l}$, with $l \leq j <h$.
 By Claim~\ref{claim:generals-reach}, $v_l, v_r \in X^*$, in which case we deleted the deletable arc 
$(c_{v,h}, d_{v,h})$.
Thus $s$ cannot reach $d_{v,h}$.
%%%%%%%%%%%%%%%%

Assume that $q$ is 
$\{c_{v,h}, d_{v,j}\}$ 
or $\{c_{v,j}, d_{v,h}\}$
with $h < i < j $.
%($h > i > j $, respectively).
Then $s$ must reach 
$\vtpos{l}$,
$c_{v,l}$, $c_{v,h}$, possibly identical
to $c_{v,l}$ 
and $d_{v,h}$, if $(c_{v,h}, d_{v,h})$ is not deleted.
On the other hand, $s$ must reach $c_{v,r}$, $c_{v, j}$, possibly identical
to $c_{v,r}$, and $d_{v,j}$,
if $(c_{v,i}, d_{v,j})$ is not deleted.
%($c_{v, h}$ , and $d_{v,h}$, respectively).  
%Thus $s$ reaches $\vtpos{l}$.
 By Claim~\ref{claim:generals-reach}, $v_l, v_r \in X^*$, in which case we deleted the deletable arcs 
$(c_{v,h}, d_{v,h})$ and $(c_{v,j}, d_{v,j})$.
Thus $s$ cannot reach $d_{v,h}$ and
$d_{v,j}$.

Having handled every forbidden pair, we deduce that we can remove at most $k'$ edges from $H$ so that $s$ does not reach any of them.

\medskip

\noindent
($\Leftarrow$)
    Suppose that there is a set 
$F \subseteq D$ with at most $k'$ arcs such that $s$ does not reach a forbidden pair in $H - F$.  
Denote $H' = H - F$.  We construct $X^*$ from $F$, which will also show that it can be reconstructed from $F$ in polynomial time.
Define $X^* \subseteq V(G)$ as follows:

\begin{itemize}
    \item 
    for each $v \in M_S$, add every element of $X \cap \vof{M_S}$ to $X^*$;

     \item 
    for each $v_t \in V(G) \setminus \vof{M_S}$, we add $v_t$ to $X^*$ if and only if one of the following holds:
    (1) $\vtpos{t} \in V(H)$ and $s$ reaches $\vtpos{t}$ in $H'$; or (2) $\vtneg{t} \in V(H)$, and $s$ does \emph{not} reach $\vtneg{t}$ in $H'$;

    \item 
    for each $v_j, v_h \in X^*$ with $j < h$, add $v_t$ to $X^*$ for each $t \in [j+1,h-1]$.
     
\end{itemize} 

Note that $X^*$ agrees with $X$.  Indeed, for $v \in M_S$, there is no gadget corresponding to $v$ in the construction and thus we only add $X \cap \vof{v}$ to $X^*$.  For $u \in N_S$, consider $u_t \in N_S''$ and a neighbor $v_t$ of $u_t$ in $\vof{Z_S}$.
If $v_t \notin Z'_S$, Step~4 adds an undeletable arc from $s$ to $\vtpos{t}$, hence $s$ reaches that vertex and we put $v_t$ in $X^*$.  If $v_t \in Z'_S$, Step~4 adds $\{s, \vtneg{t}\}$ to $P$, and thus $s$ does not reach $\vtneg{t}$ in $H'$, and again we add $v_t$ to $X^*$.  Therefore, we add all the $\vof{Z_S}$ neighbors of $u_t$ to $X^*$, and so it agrees with $X$.

We claim that $X^*$ covers every temporal edge of $G$.  Since $X$ is a feasible assignment, every temporal edge $u_t v_t \in E(G)$ with $u, v \in V_S$ is covered by $X$, and thus also by $X^*$ since $X \subseteq X^*$.

Next consider a temporal edge $u_t v_t \in E(G)$ with $u_t \in \vof{M_S} \cup N''_S$ and $v_t \in \vof{Z_S}$.  
If $u_t \in X$, the temporal edge is covered since $X \subseteq X^*$.  So assume that $u_t \notin X$, i.e. $u_t \in (\vof{M_S} \setminus X) \cup N''_S$.  If $v_t \notin Z'_S$, Step~\ref{step:s-edges} adds an undeletable arc from $s$ to $\vtpos{t}$, which implies that this arc is in $H'$.  Thus $s$ reaches $\vtpos{t}$ and $v_t \in X^*$ by construction, and $u_t v_t$ is covered.  If $v_t \in Z'_S$, then $\{s, \vtneg{t}\}$ is in $P$ owing to Step~\ref{step:s-edges}, and thus $s$ does not reach $\vtneg{t}$ in $H'$.  Again by construction, $v_t \in X^*$.

Finally, consider a temporal edge $u_t v_t \in E(G)$ with $u_t, v_t \in \vof{Z_S} \cup (\vof{N_S} \setminus N''_S$).
As argued in Step~3, we know that at least one of $u_t$ or $v_t$ is in $Z'_S$.  Suppose that both $u_t, v_t \in Z'_S \cup N'_S$.  
Then by Step~\ref{step:edges}, $\{\utneg{t}, \vtneg{t}\} \in P$, and there is at least one of the two that $s$ does not reach.  By construction, one of $u_t$ or $v_t$ is in $X^*$ and the temporal edge is covered. 
So suppose, without loss of generality, that $u_t \in Z'_S \cup N'_S, v_t \notin Z'_S \cup N'_S$.
Then $\utneg{t} \in V(H)$ and $\vtpos{t} \in V(H)$.
If $s$ does not reach $\utneg{t}$ in $H'$, then $u_t \in X^*$ and the temporal edge is covered. Thus we may assume that $s$ reaches $\utneg{t}$ in $H'$.  By Step~\ref{step:edges}, there is an undeletable arc $(\utneg{t}, \vtpos{t})$  in $H$, and thus in $H'$, which implies that $s$ reaches $\vtpos{t}$.  Thus $v_t \in X^*$ and the temporal edge is covered.
Note that we have covered every case of a possible temporal edge, and we deduce that $X^*$ covers every temporal edge.

We next claim that $sp(X^*) \leq k$. 
Since $X^* \cap \vof{M_S} = X$, the  vertices in $M_S$ have span equal
to $sp(X)$.
We must argue that the vertices of $V_B \setminus M_S$ have a span of at most $k' = k - sp(X)$.
Consider a vertex $v \in V_B \setminus M_S = N_S \cup Z_S$ that has span $sp(v, X^*)$ 
more than $0$ in $X^*$
(recall that 
$sp(v, X^*)$  denotes the span of $v$ in $X^*$). 
We want to show that $sp(v, X^*)$
edges of $H$ were deleted in the gadget corresponding to $v$.

%% TENTATIVE
In the following we denote by $v_l$
and $v_r$, with $l,r \in [2, T - 1]$
%$[2,i-1]$,
the minimum  and maximum timestamp,
respectively, such that $v_l \in X^*$
and $v_r \in X^*$.

Let $v_i \in Z'_S \cup N'_S$, where $i \in [l, r]$.
% Let $j$ be the minimum index such that $v_j \in X^*$ and let $l$ be the maximum index such that $v_l \in X^*$, noting that $j < l$.  
Suppose that $r < i$.
Then by the construction of $X^*$, $s$ 
reaches $\vtpos{l}$ and $\vtpos{r}$,
hence $s$ reaches 
(1) $c_{v,l}$ and thus
$c_{v,j}$,
for each $j \in [l,i-1]$,
and (2) $b_{v,r}$ and  thus
$b_{v,j}$, for each $j \in [1,r]$. 
%with $1 \leq j \leq r$.
Thus the arcs $(c_{v,j}, d_{v,j})$,
with $j \in  [l,r-1]$, 
have to be deleted due the forbidden pairs 
$\{d_{v,j}, b_{v,y} \}$, with $l \leq j < y \leq r$.  This amounts to $r -l $ deletions, which is the span of $v$ in $X^*$.

Suppose instead that
with $i < l$.
Similarly to the previous case, by the construction, 
$s$  reaches $\vtpos{l}$ and $\vtpos{r}$,
hence $s$ reaches
(1) $c_{v,r}$ and thus
$c_{v,j}$,
for each $j \in [i+1,r]$, 
%i + 1 \leq j \leq r$,
and (2) $b_{v,l}$ and  thus
$b_{v,j}$, for each $j$ with 
$j \in [l, T]$.
%$l \leq j \leq T$.
Thus the arcs $(c_{v,j}, d_{v,j})$,
for each $j \in [l+1,r]$,
%l+1 \leq j \leq r$, 
have to be deleted due the forbidden pairs 
$\{d_{v,j}, b_{v,y}\}$, with $l \leq y < j \leq r$.  Again, this amounts to $r - l $ deletions, which is the span of $v$
in $X^*$.

Finally, suppose that $l \leq i \leq r$.
We have three cases depending on the
fact that $l = i$, $r = i$ or $l < i < r$.
Consider the first case
$l = i < r$.  Thus by the construction of $X^*$, $s$ reaches $v_r^+$ but does not reach $v_i^-$.
Moreover, $s$ reaches $c_{v,r}$
thus
$s$ reaches $c_{v,j}$,
for each $j \in  [i+1,r]$.
%i + 1 \leq j \leq r$.
Thus arcs $(c_{v,j}, d_{v,j})$,
for each $j \in [i+1,r]$,
%i + 1 \leq j \leq r$, 
have to be deleted in order to make $\vtneg{i}$
not reachable from $s$. 
This amounts to 
$r- i = r- l$ deletions,
which is the span of $v$
in $X^*$.

Consider the second case
$l < i = r$. 
Similarly to the previous case, 
$s$ reaches $\vtpos{l}$ but does not reach $\vtneg{i}$.  Moreover, $s$ reaches $c_{v,j}$,
for each $j$ with $j \in [l,i-1]$.
%l \leq j \leq i - 1$.
Thus arcs $(c_{v,j}, d_{v,j})$,
for each $j \in [l,i-1]$
%$l \leq j \leq i - 1$, 
have to be deleted in order to make $\vtneg{i}$
not reachable from $s$.  This amounts to $i - j = r - l$ deletions,
which is the span of $v$
in $X^*$.

Finally, consider the third case
$l < i < r$.
Then arcs 
$(c_{v,j}, d_{v,j})$, with 
$j \in [l,i-1]$
%l \leq j \leq i-1$,
and $(c_{v,j}, d_{v,j})$, with 
$j  \in [i+1,r]$,
%$i+1 \leq j \leq r$,
have to be deleted due to forbidden
pairs
$\{c_{v,j}, d_{v,z}\}$,
with $j < i < z$ and
forbidden pairs 
$\{c_{v,z}, d_{v,j}\}$,
with $z < i < j$.
This requires $i - l + r - i = r - l$ deletions,  which is the span of $v$
in $X^*$.

We thus see that each vertex $v$ of 
$V_B \setminus M_S$ has a span that is at most the number of arcs deleted in the gadget of $H$ corresponding to $v$.  
Therefore, $X^*$ is a temporal cover of span at most $sp(X) + k' \leq k$, thus completing
the proof.
\end{proof}

\subsection*{Proof of 
Lemma~\ref{lem:ConstDigraphFPT}}

\setcounter{lemma}{12}

\begin{lemma}
    The $\DigraphCut{}$ problem can be solved in time $O^*(2^{k})$, where $k$ is the number of arcs to delete.
\end{lemma}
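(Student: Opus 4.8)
The plan is to reduce \DigraphCut{} to the vertex-deletion variant of \textsc{Digraph Pair Cut} solved in~\cite{DBLP:journals/jacm/KratschW20}, which runs in time $O^*(2^{k})$ where $k$ is the number of vertices to delete. The source of difficulty is twofold: the cited algorithm deletes vertices rather than arcs, and it has no notion of undeletable (protected) elements, whereas our formulation marks a subset $D \subseteq A(H)$ as deletable and forbids touching $A(H) \setminus D$.

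First I would handle the arc-to-vertex translation together with deletability in one construction. Build a directed graph $H^\star$ as follows. For each deletable arc $(u,v) \in D$, introduce a fresh vertex $x_{(u,v)}$, delete the arc $(u,v)$, and add arcs $(u, x_{(u,v)})$ and $(x_{(u,v)}, v)$; mark $x_{(u,v)}$ as a candidate for deletion. Deleting $x_{(u,v)}$ in $H^\star$ then exactly mirrors deleting $(u,v)$ in $H$, since the only $s$--$v$ paths through the old arc now pass through $x_{(u,v)}$. For each \emph{undeletable} arc $(u,v) \in A(H) \setminus D$, apply the same subdivision trick but create $k' + 1$ parallel subdivision vertices $x^{1}_{(u,v)}, \ldots, x^{k'+1}_{(u,v)}$, each with arcs $(u, x^{j}_{(u,v)})$ and $(x^{j}_{(u,v)}, v)$; since a solution may delete at most $k'$ vertices, at least one of these $k'+1$ parallel paths survives, so the connectivity provided by an undeletable arc is effectively preserved. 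Finally, for each original vertex $w \in V(H) \setminus \{s\}$ that we must never delete (all of them, in fact, after the above rewriting), replace $w$ by $k'+1$ copies $w^1, \ldots, w^{k'+1}$, redirecting arcs so that each in-arc points to all copies and each out-arc leaves from all copies; this forces the only useful deletions to be the subdivision vertices $x_{(u,v)}$ introduced for $D$. Keep $s$ as a single undeletable source, and for each forbidden pair $\{u, v\} \in P$ add the forbidden pair $\{u^1, v^1\}$ (any fixed copy works, since all copies of $w$ are reachable from $s$ simultaneously). Set the budget to $k' = k$.

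The correctness argument is then a routine equivalence: a solution $F \subseteq D$ of \DigraphCut{} with $|F| \le k$ corresponds to deleting the $|F|$ subdivision vertices $\{ x_{(u,v)} : (u,v) \in F \}$ in $H^\star$, and one checks that $s$ reaches a copy $w^j$ in $H^\star$ minus these vertices if and only if $s$ reaches $w$ in $H - F$; hence no forbidden pair is jointly reachable. Conversely, a vertex-deletion solution of size at most $k$ in $H^\star$ can touch only subdivision vertices of deletable arcs (deleting any copy of an original vertex or any single one of the $k'+1$ parallels of an undeletable arc is useless, as the remaining copies/parallels restore all reachability), so it induces the corresponding $F \subseteq D$ of the same size. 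I expect the main obstacle — though a minor one — is getting the parallel-copy bookkeeping exactly right so that undeletable arcs and protected vertices genuinely survive any $k'$ deletions while the blow-up stays polynomial: we create at most $O((k'+1) \cdot (|V(H)| + |A(H)|))$ vertices, which is polynomial in the input, so the overall running time is $O^*(2^{k})$ as claimed. The remaining details (verifying that subdividing does not create spurious reachability, and that forbidden pairs transfer faithfully) are straightforward and I would dispatch them briefly.
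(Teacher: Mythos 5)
Your overall strategy is the same as the paper's: reduce to the vertex-deletion \textsc{Digraph Pair Cut} of Kratsch and Wahlstr\"om by blowing each original vertex up into $k+1$ copies, subdividing each deletable arc by a single middle vertex and each undeletable arc by $k+1$ parallel middle vertices, and keeping the budget at $k$. There is, however, a genuine gap in how you transfer the forbidden pairs. You add only the single pair $\{u^1, v^1\}$ for each $\{u,v\} \in P$, justified by the remark that all copies of a vertex are reachable from the source simultaneously. That is true only when no copy is deleted, and nothing in the vertex-deletion problem prevents a solution from deleting $u^1$ itself: doing so makes $u^1$ unreachable and satisfies the pair ``for free'', without cutting any connectivity. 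Concretely, take $H$ with undeletable arcs $(s,u)$ and $(s,v)$, $P = \{\{u,v\}\}$, $D = \emptyset$ and $k = 1$; this is a no-instance of \DigraphCut{}, yet in your $H^\star$ deleting the single vertex $u^1$ satisfies the only pair $\{u^1, v^1\}$ within budget, so the reduced instance is a yes-instance. Consequently the backward direction of your equivalence fails, and in particular your normalization step (``a solution can touch only subdivision vertices of deletable arcs, since deleting a copy is useless'') is not available: removing $u^1$ from such a solution can re-violate the pair $\{u^1,v^1\}$, so you cannot simply drop those deletions.

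The fix is exactly what the paper does: for each $\{u,v\} \in P$ add \emph{every} cross pair $\{u^i, v^j\}$ with $i, j \in [k+1]$ (and, for pairs involving the source, all pairs with each copy of the other endpoint). With all cross pairs present, deleting fewer than all $k+1$ copies of a vertex cannot satisfy any pair, because a surviving copy has the same in-neighbourhood and is reachable whenever the deleted copy would have been; hence solutions can indeed be normalized to delete only the middle vertices of deletable arcs, and the two instances become equivalent. Everything else in your construction (one subdivision vertex per deletable arc, $k+1$ parallel subdivision vertices per undeletable arc, keeping the source single, the polynomial size bound and the resulting $O^*(2^k)$ running time) matches the paper's proof and is fine.
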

% \ml{ML: this is the same statement as in paper [15].  Maybe we should give a more specific name to our variant of $\DigraphCut{}$.
% In any case, I rewrote the part below.}

% \rd{I agree. What about Edge Deletion Constrained Pair Cut or just Edge Deletion Pair Cut?}

% \ml{ML: for now I just call it \DigraphCut{}.  I'd keep the name \textsc{Digraph Pair Cut} in the abstract and intro, since this is a popular name for some people, but in the rest of the text you can change the name if you want.}

\begin{proof}
We call \textsc{Vertex-Deletion Digraph Pair Cut} the problem in which, given a directed graph $H$, a source $s \in V(H)$, pairs $\P \subseteq {V \choose 2}$, and integer $k$, we must decide whether there is $R \subseteq V(H) \setminus \{s\}$ with $|R| \leq k$ such that in $H - R$, $s$ does not reach both $u$ and $v$ for every $\{u, v\} \in \P$ (note that $H - R$ removes vertices here, not arcs).  
In~\cite[Theorem 6.1]{DBLP:journals/jacm/KratschW20}, this problem was shown to be solvable in time $O^*(2^k)$.
We show that $\DigraphCut{}$ as we defined it reduces to \textsc{Vertex-Deletion Digraph Pair Cut}, with the same parameter value $k$.

Suppose that we have an instance of \textsc{Digraph Pair Cut}, with directed graph $H$, source $s$, pairs $\P$, deletable arcs $D$, and integer $k$.  
From this instance, obtain an instance of \textsc{Vertex-Deletion Digraph Pair Cut} with directed graph $H'$, source $s'$, pairs $\P'$, and integer $k$ as follows (note that $k$ is unchanged).  First for each $u \in V(H)$, add $k + 1$ copies $u^1, \ldots, u^{k+1}$ of $u$ to $V(H')$.  Also add a new vertex $s'$ to $V(H')$, which serves as the source for the modified instance. 
Add to $A(H')$ the set of arcs $(s', s^1), \ldots, (s', s^{k+1})$.
Then for each deletable arc $(u, v) \in D$, add to $H'$ a new vertex $\midv{u}{v}{1}$, and the set of arcs 
\[
\{(u^i, \midv{u}{v}{1}) : i \in [k+1]\} \cup \{(\midv{u}{v}{1}, v^j : j \in [k+1]\}
\]
Finally, for each undeletable arc $(u, v) \in A(H) \setminus D$, add to $H'$ the $k + 1$ new vertices $\midv{u}{v}{1}, \midv{u}{v}{2}, \ldots, \midv{u}{v}{k+1}$, and then 
 add to $A(H')$ the set of arcs 
\[
\{(u^i, \midv{u}{v}{l}) : i \in [k+1], l \in [k+1]\} \cup \{(\midv{u}{v}{l}, v^j) : l \in [k+1], j \in [k+1]\}
\]
In other words, each vertex of $H$ has $k + 1$ corresponding vertices in $H'$, making the latter pointless to delete.  For $(u, v) \in D$, deleting the arc corresponds to deleting $\midv{u}{v}{1}$ since it removes the path of length $2$ from every $u^i$ to every $v^j$.  For $(u, v) \in A(H) \setminus D$, there are too many $\midv{u}{v}{l}$ copies, making them pointless to delete.

Finally, for each $\{u, v\} \in \P$, we add to $\P'$ all the pairs $\{u^i, v^j\}$ for every $i, j \in [k+1]$.

Assume that there is $F \subseteq D$ with $|F| \leq k$ such that $s$ reaches no pair of $P$ in $H - F$.  In $H'$, we delete the set of vertices $R = \{\midv{u}{v}{1} : (u, v) \in F \}$.
Note that $|R| = |F| \leq k$.
Suppose for contradiction that in $H' - R$, $s'$ reaches both $\{u^i, v^j\} \in \P'$.  Since every arc of $H'$ is incident to a vertex of the form $\midv{u}{v}{l}$, in $H'$, the path from $s'$ to $u^i$ in $H' - R$ has the form 
\[s' \rightarrow s^{b_0} \rightarrow \midv{s}{x_1}{a_1} \rightarrow x_1^{b_1} \rightarrow \midv{x_1}{x_2}{a_2} \rightarrow x_2^{b_2} \rightarrow \midv{x_2}{x_3}{a_3} \rightarrow \ldots \rightarrow \midv{x_l}{u}{a_{l+1}} \rightarrow u^i
\] 
for some vertices $x_1, \ldots, x_l \in V(H)$ and indices $b_0, a_1, b_1, \ldots, a_{l+1}$.
By our construction of $R$, this means that in $H - F$, all the arcs $(s, x_1), (x_1, x_2), \ldots, (x_l, u)$ are present, and that $s$ reaches $u$ in $H - F$.  By the same logic, $s$ also reaches $v$ in $H - F$, a contradiction since $\{u^i, v^j\} \in \P'$ implies that $\{u, v\} \in \P$.  Thus $R$ is a solution for $H'$.

Conversely, assume that there is $R \subseteq V(H') \setminus \{s'\}$ with $|R| \leq k$ such that $s$ reaches no pair of $P'$ in $H' - R$.  
We may assume that $R$ does not contain a vertex $u^i$ with $u \in V(H)$, since $R$ cannot contain every copy of $u$.  Likewise, for undeletable $(u, v) \in A(H) \setminus D$, we may assume that $R$ does not contain a vertex $\midv{u}{v}{l}$ since $R$ cannot contain every copy.  Therefore, we may assume that $R$ only contains vertices of the form $\midv{u}{v}{1}$, where $(u, v) \in D$.
Define $F = \{(u, v) : \midv{u}{v}{1} \in R \}$.  
Note that $F$ has at most $|R| \leq k$ arcs, and they are all deletable.
Now suppose for contradiction that $s$ reaches both $u, v$ for $\{u,v\} \in \P$ in $H - F$.
Hence in $H - F$ there are paths $P_u, P_v$ from $s$ to $u$ and $v$, respectively. 
Note that for every arc $(x, y)$ of $P_u$, the vertex $\midv{x}{y}{1}$ is still present in $H' - R$.  Thus by replacing every $(x, y)$ with the subpath $x^1, \midv{x}{y}{1}, y^1$, we can obtain a path form $s'$ to $u^1$ in $H' - R$.  Likewise, there is a path from $s'$ to $v^1$ in $H' - R$.
This is a contradiction since $\{u, v\} \in \P$ implies that $\{u^i, v^j\} \in \P'$.  Hence $F$ is a valid solution for $H$.

To conclude, constructing $H'$ can clearly be done in time polynomial in $|V(H)| + |A(H)|$.  It follows that we can solve the instance $H$ in time $O^*(2^k)$ by constructing $H'$ and solving the vertex-deletion variant on it. 
\end{proof}

\end{document}